\newtheorem{theorem}{Theorem}[section]
\newtheorem{definition}[theorem]{Definition}
\newtheorem{remark}[theorem]{Remark}
\newtheorem{corollary}[theorem]{Corollary}
\newtheorem{lemma}[theorem]{Lemma}
\newtheorem{proposition}[theorem]{Proposition}
\newtheorem{aplemma}{Lemma}[section]
\newtheorem{assumption}{Assumption}[section]
\begin{document}

\title{\LARGE \bf Event-Triggered Control over Unreliable Networks \\
Subject to Jamming Attacks}

\author{Ahmet Cetinkaya, Hideaki Ishii, and Tomohisa Hayakawa \thanks{A. Cetinkaya and T. Hayakawa are with the Department of Mechanical and Environmental Informatics, Tokyo Institute of Technology, Tokyo 152-8552, Japan. {\tt\small{ahmet@dsl.mei.titech.ac.jp, hayakawa@mei.titech.ac.jp}}}
\thanks{H. Ishii is with the Department of Computational Intelligence and Systems Science, Tokyo Insitute of Technology, Yokohama, 226-8502, Japan. {\tt\small{ishii@dis.titech.ac.jp}}}
\thanks{This work was supported in part by Japan Science and Technology Agency under the CREST program.}}

\maketitle
\begin{abstract}Event-triggered networked control of a linear dynamical
system is investigated. Specifically, the dynamical system and the
controller are assumed to be connected through a communication channel.
State and control input information packets between the system and
the controller are attempted to be exchanged over the network only
at time instants when certain triggering conditions are satisfied.
We provide a probabilistic characterization for the link failures
which allows us to model random packet losses due to unreliability
in transmissions as well as those caused by malicious jamming attacks.
We obtain conditions for the almost sure stability of the closed-loop
system, and we illustrate the efficacy of our approach with a numerical
example.\end{abstract}

\section{Introduction}

One of the main challenges in networked control systems is that communication
between plant and controller may not always be reliable. State measurement
and control input packets may fail to be transmitted at times due
to network congestion or errors in communication. In the literature,
unreliability of a network is often characterized through random models
for packet loss events \cite{hespanha2007}. For instance, in \cite{ishii2009,Lemmon:2011:ASS:1967701.1967744},
Bernoulli processes are used for modeling packet losses in a network.
Furthermore, in \cite{gupta2009,okano2014}, packet loss events are
characterized in a more general way by employing Markov chains. In
these studies, a variety of control methods are proposed to ensure
stability of networked control systems that face random packet losses.

More recently, cyber security has become a critical issue in networked
control systems since the channels are nowadays connected via the
Internet or wireless communications \cite{fawzi2014,wholejournal2015}.
Here, in addition to random losses, we consider communication effects
due to jamming attacks initiated by malicious agents. Such attacks
may block the communication link and effectively prevent transmission
of packets between the plant and the controller. In a few recent works
\cite{amin2009,HSF-SM:ct-13,depersis2014,liu2014stochastic}, networked
control problems under malicious jamming attacks were investigated.

In this paper we explore feedback control of a discrete-time linear
system over a network that is subject to both random packet losses
due to unreliability of the communication channel and jamming attacks
coming from an intelligent attacker. We employ an event-triggered
control framework, where the plant and the controller attempt to exchange
state and control input information packets only at times that correspond
to event-triggering instants. Following the approach in \cite{velasco2009,heemels2013periodic},
we utilize Lyapunov-like functions to characterize the triggering
conditions. The triggering conditions that we propose in this paper
ensure that the value of a Lyapunov-like function of the state stays
within certain limits. Packet exchanges are attempted only before
the value of the Lyapunov-like function is predicted to exceed a certain
level. In a successful packet exchange scenario, state measurements
are sent from the plant to the controller, which computes a control
input and sends it back to the plant. However, state measurement or
control input packets may fail to be transmitted due to random packet
losses and jamming attacks. We model random losses using a binary-valued
\emph{time-inhomogeneous} Markov chain. To characterize jamming attacks,
we follow the approach of \cite{depersis2014}. Rather than specifying
predetermined patterns or distributions for the occurrences of jamming
attacks, we allow jamming attacks to happen arbitrarily as long as
the total number of packet exchange attempts that face jamming attacks
are almost surely bounded by a certain ratio of the number of total
packet exchange attempts. 

We consider both the case where random packet losses and jamming attacks
are independent and the case where the attacker may use information
of past random packet losses in generating a jamming attack strategy.
The main theoretical challenge in dealing with both of these cases
stems from the fact that random losses and jamming attacks are of
different nature and hence have different models. By utilizing a tail
probability inequality for the sum of processes that represent random
losses and jamming attacks, we show that a probabilistic characterization
for the evolution of the total number of packet exchange failures
allows us to deal with both cases. Based on this characterization,
we obtain conditions for almost sure asymptotic stability of the closed-loop
event-triggered networked control system. Furthermore, we present
a numerical method for finding stabilizing feedback gains as well
as parameters for the event-triggering mechanism. 

The paper is organized as follows. In Section~\ref{sec:Event-Triggered-Control-over},
we describe the event-triggered networked control system under random
and jamming-related packet losses. We provide sufficient conditions
for almost sure asymptotic stability of the closed-loop control system
in Section~\ref{sec:Conditions-for-Almost-Sure} and present an illustrative
numerical example in Section~\ref{sec:Numerical-Example}. Finally,
in Section~\ref{sec:Conclusion}, we conclude the paper. 

We use a fairly standard notation in the paper. Specifically, we denote
positive and nonnegative integers by $\mathbb{N}$ and $\mathbb{N}_{0}$,
respectively. We write $\mathbb{R}$ for the set of real numbers,
$\mathbb{R}^{n}$ for the set of $n\times1$ real column vectors,
and $\mathbb{R}^{n\times m}$ for the set of $n\times m$ real matrices.
Moreover, $(\cdot)^{\mathrm{T}}$ denotes transpose, $\|\cdot\|$
denotes the Euclidean vector norm, and $\left\lfloor \cdot\right\rfloor $
denotes the largest integer that is less than or equal to its real
argument. The notation $\mathrm{\mathbb{P}}[\cdot]$ denotes the probability
on a probability space $(\Omega,\mathcal{F},\mathbb{P})$ with filtration
$\{\mathcal{F}_{i}\}_{i\in\mathbb{N}_{0}}$.

\section{Event-Triggered Networked Control \label{sec:Event-Triggered-Control-over}}

In this section we provide the mathematical model for the event-triggered
networked control system. We then present a characterization of a
network that faces random packet losses and packet losses caused by
jamming attacks.

\subsection{Event-Triggered Control System}

Consider the linear dynamical system 
\begin{align}
x(t+1) & =Ax(t)+Bu(t),\quad x(0)=x_{0},\quad t\in\mathbb{N}_{0},\label{eq:system}
\end{align}
 where $x(t)\in\mathbb{R}^{n}$ and $u(t)\in\mathbb{R}^{m}$ denote
the state and the control input; furthermore, $A\in\mathbb{R}^{n\times n}$
and $B^{n\times m}$ are the state and input matrices, respectively. 

In this paper, we use the event-triggering framework (see \cite{heemels2013periodic}
and the references therein), where the control input is only updated
when a certain triggering condition is satisfied. The triggering condition
is checked at each time step at the plant side. 

In the networked operation setting, the plant and the controller are
connected through a communication channel and attempt to exchange
information packets at times corresponding to event-triggering instants.
We consider the case where packets are transmitted without delay,
but they may get lost. In a successful packet exchange scenario, at
a certain time instant, measured plant states are transmitted to the
controller, which generates a control input based on the received
state information and sends a packet containing the control input
information to the plant. The transmitted control input is applied
at the plant side. In the case of an unsuccessful packet exchange
attempt, either the measured state packet or the control input packet
may get dropped, and in such cases control input at the plant side
is set to $0$. 

We use $\tau_{i}\in\mathbb{N}_{0},i\in\mathbb{N}_{0}$, to denote
the time instants at which packet exchanges between the plant and
the controller are attempted. To characterize these time instants
we utilize a quadratic Lyapunov-like function $V\colon\mathbb{R}^{n}\to[0,\infty)$
given by $V(x)\triangleq x^{\mathrm{T}}Px$, where $P>0$. Letting
$\tau_{0}=0$, we describe $\tau_{i}$, $i\in\mathbb{N}$, and control
input $u(t)$ applied to the plant by 
\begin{align}
\tau_{i+1} & \triangleq\min\Big\{ t\in\{\tau_{i}+1,\tau_{i}+2,\ldots\}\colon t\geq\tau_{i}+\theta\nonumber \\
 & \quad\quad\quad\mathrm{or}\,\,\,V(Ax(t)+Bu(\tau_{i}))>\beta V(x(\tau_{i}))\Big\},\label{eq:attemptedpacketexchangetimes}\\
u(t) & \triangleq\left(1-l(i)\right)Kx(\tau_{i}),\,t\in\{\tau_{i},\ldots,\tau_{i+1}-1\},\label{eq:controlinputatplantside}
\end{align}
 for $i\in\mathbb{N}_{0}$, where $\beta\in(0,1)$, $\theta\in\mathbb{N}$,
and $\{l(i)\in\{0,1\}\}_{i\in\mathbb{N}_{0}}$ is a binary-valued
process that characterizes success or failure of packet exchange attempts.
When $l(i)=0$, packet exchange attempt at time $\tau_{i}$ is successful
and the piecewise-constant control input at the plant side is set
to $u(\tau_{i})=Kx(\tau_{i})$, where $K\in\mathbb{R}^{m\times n}$
denotes the feedback gain. On the other hand, the case $l(i)=1$ indicates
that either the packet sent from the plant or the packet sent from
the controller is lost at time $\tau_{i}$. Again, in such cases,
control input at the plant side is set to $0$. 

The triggering condition (\ref{eq:attemptedpacketexchangetimes})
involves two parts. The part characterized by $V(Ax(t)+Bu(\tau_{i}))>\beta V(x(\tau_{i}))$
ensures that after a successful packet exchange attempt at $\tau_{i}$,
the value of the Lyapunov-like function $V(\cdot)$ stays below the
level $\beta V(x(\tau_{i}))$ until the next packet exchange attempt.
Furthermore, the triggering condition $t\geq\tau_{i}+\theta$ ensures
that two consecutive packet exchange attempt instants are at most
$\theta$ steps apart, that is, $\tau_{i+1}-\tau_{i}\leq\theta$,
$i\in\mathbb{N}_{0}$. Although the specific value of $\theta$ does
not affect the results developed below, the boundedness of packet
exchange attempt intervals guarantees that $\tau_{i}$ (and hence
$V(x(\tau_{i}))$) are well-defined for each $i\in\mathbb{N}$. In
practice, the value of $\theta$ can be selected considering how frequent
the plant state is desired to be monitored by the controller side. 

\begin{figure}
\centering  \includegraphics[width=0.85\columnwidth]{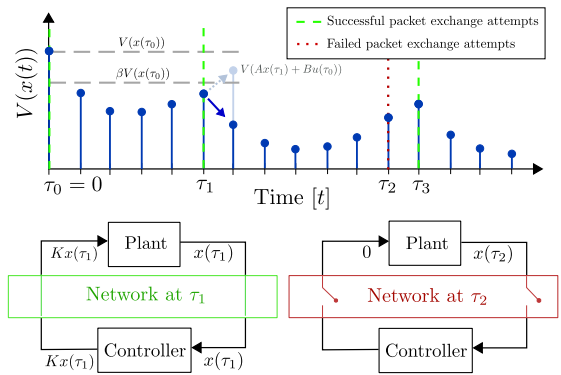} 

\caption{Networked control system operation}
 \label{operation}
\end{figure}

Operation of the event-triggered networked control system is illustrated
in Fig.~\ref{operation}. The triggering condition (\ref{eq:attemptedpacketexchangetimes})
is checked at the plant side at each step $t\in\mathbb{N}_{0}$. At
times $t=\tau_{i}$, $i\in\mathbb{N}$, the triggering condition is
satisfied and packet exchanges between the plant and the controller
are attempted. For this example, packet exchange is attempted at time
$t=\tau_{1}$, since $V(Ax(t)+Bu(\tau_{0}))>\beta V(x(\tau_{0}))$.
At this time instant, the plant and the controller successfully exchange
state and control input packets over the network, and as a result,
control input on the plant side is updated to $Kx(\tau_{1})$. Note
that packet exchange  attempts are not always successful, and may
fail due to loss of packets in the network. For instance, the packet
exchange attempt at time $\tau_{2}$ fails for the case of Fig. \ref{operation}.
In this case, the control input at the plant side is set to $0$ at
time $\tau_{2}$, which results in an unstable behavior. A packet
exchange is attempted again at the very next time step $\tau_{3}$,
since the triggering condition is also satisfied at that instant.

\begin{remark}The event-triggering framework we describe above requires
a plant-side mechanism for checking the triggering condition (\ref{eq:attemptedpacketexchangetimes})
at each time step. If the overall practical setup of the process does
not allow placing such a mechanism at the plant side, following the
self-triggering control approach described in \cite{heemels2012cdc},
we can use a decision mechanism at the controller side. In this case,
packet exchange times are decided at the controller side based on
the state information obtained at previously successful packet exchange
attempts. \end{remark}

\subsection{Characterization of a Network with Random Packet Losses and Packet
Losses Caused by Jamming Attacks}

Packet transmission failures in a network may have different reasons.
Packet losses caused by network congestion may be accurately described
using stochastic models \cite{altman2005}. However, only stochastic
models would not be enough to characterize packet losses if the communication
channel is subject to jamming attacks of a malicious agent. In what
follows we characterize the effects of certain stochastic and jamming-related
packet loss models in a unified manner by investigating dynamical
evolution of the total number of packet exchange failures. 

First, we define a nonnegative integer-valued process $\{L(k)\in\mathbb{N}_{0}\}_{k\in\mathbb{N}}$
by 
\begin{align}
L(k) & \triangleq\sum_{i=0}^{k-1}l(i),\quad k\in\mathbb{N}.\label{eq:bigldefinition}
\end{align}
 Note that $L(k)$ denotes the total number of \emph{failed} packet
exchange attempts during the time interval $[0,\tau_{k-1}]$. 

\begin{assumption} \label{MainAssumption} There exist scalars $\rho\in[0,1]$,
$\gamma_{k}\in[0,\infty)$, $k\in\mathbb{N}$, such that 
\begin{align}
\mathbb{P}[L(k)>\rho k] & \leq\gamma_{k},\quad k\in\mathbb{N},\label{eq:lcond1}\\
\sum_{k\in\mathbb{N}}\gamma_{k} & <\infty.\label{eq:lcond2}
\end{align}

\end{assumption} \vskip 8pt

Note that conditions (\ref{eq:lcond1}) and (\ref{eq:lcond2}) provide
a probabilistic characterization of the evolution of the total number
of packet exchange failures through scalars $\rho\in[0,1]$, $\gamma_{k}\in[0,\infty)$,
$k\in\mathbb{N}$. A closely related characterization for packet dropouts
in a communication link is presented in \cite{Lemmon:2011:ASS:1967701.1967744};
the scalar $\rho$ in (\ref{eq:lcond1}) corresponds to the notion
\emph{dropout rate} discussed there. 

The following result is a direct consequence of Borel-Cantelli lemma
(see \cite{klenke2008}) and it shows that under Assumption~\ref{MainAssumption},
the long run average of the total number of failed packet exchanges
is upper bounded by $\rho$. 

\vskip 3pt

\begin{lemma} \label{key-lemma1} If there exist scalars $\rho\in[0,1]$,
$\gamma_{k}\in[0,\infty)$, $k\in\mathbb{N}$, such that (\ref{eq:lcond1}),
(\ref{eq:lcond2}) hold, then $\limsup_{k\to\infty}\frac{L(k)}{k}\leq\rho$,
almost surely. \end{lemma}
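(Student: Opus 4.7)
The plan is to apply the first Borel--Cantelli lemma to the sequence of events $\{A_k\}_{k\in\mathbb{N}}$ defined by $A_k \triangleq \{\omega\in\Omega\colon L(k)(\omega) > \rho k\}$. The hypothesis (\ref{eq:lcond1}) gives $\mathbb{P}[A_k] \leq \gamma_k$, and the summability condition (\ref{eq:lcond2}) then yields $\sum_{k\in\mathbb{N}} \mathbb{P}[A_k] \leq \sum_{k\in\mathbb{N}} \gamma_k < \infty$.

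By the first Borel--Cantelli lemma, it follows that $\mathbb{P}[\limsup_{k\to\infty} A_k] = 0$, i.e., with probability one only finitely many of the events $A_k$ occur. This means there exists a (random) index $K(\omega)\in\mathbb{N}$, finite almost surely, such that for all $k \geq K(\omega)$ we have $L(k)(\omega) \leq \rho k$, and hence $L(k)(\omega)/k \leq \rho$. Taking the $\limsup$ in $k$ on this tail yields $\limsup_{k\to\infty} L(k)/k \leq \rho$ almost surely, which is the desired conclusion.

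There is essentially no obstacle here; this is a direct, one-shot application of Borel--Cantelli, and the entire argument rests on correctly identifying the events $A_k$ whose probabilities are bounded by $\gamma_k$. The only minor point to be careful about is to emphasize that the almost sure statement holds on the complement of $\limsup A_k$, so the exceptional null set is independent of $k$, which is exactly what allows the uniform tail bound needed to pass to the $\limsup$.
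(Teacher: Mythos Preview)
Your proof is correct and matches the paper's approach exactly: the paper does not spell out a proof but simply states that the lemma is a direct consequence of the Borel--Cantelli lemma, which is precisely the argument you have written out.
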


\vskip 3pt

It is important to note that for any packet loss model, Assumption~\ref{MainAssumption}
is trivially satisfied with $\rho=1$, and $\gamma_{k}=0,k\in\mathbb{N}$,
since $L(k)\leq k$. On the other hand, as illustrated in the following,
for certain random and jamming-related packet loss models, $\rho$
can be obtained to be strictly smaller than $1$. 

\vskip 5pt

\subsubsection{Random Losses\label{RemarkRandomPacketLoss}}

To characterize random packet losses in the communication channel,
we utilize time-inhomogeneous Markov chains. Specifically, let $\{l_{\mathrm{R}}(i)\in\{0,1\}\}_{i\in\mathbb{N}_{0}}$
be an $\mathcal{F}_{i}$-adapted time-inhomogeneous Markov chain characterized
by initial distributions $\vartheta_{q}\in[0,1]$, $q\in\{0,1\}$,
and time-varying transition probabilities $p_{q,r}\colon\mathbb{N}_{0}\to[0,1]$,
$q,r\in\{0,1\}$, such that 
\begin{align*}
\mathbb{P}[l_{\mathrm{R}}(0)=q] & =\vartheta_{q},\,\,q\in\{0,1\},\\
\mathbb{P}[l_{\mathrm{R}}(i+1)=r|l_{\mathrm{R}}(i)=q] & =p_{q,r}(i),\,q,r\in\{0,1\},\,i\in\mathbb{N}_{0}.
\end{align*}

The state $l_{\mathrm{R}}(i)=1$ indicates that the network faces
random packet losses at time $\tau_{i}$, and hence the packet exchange
attempt at $\tau_{i}$ results in failure. Note that in this characterization,
the event that a packet exchange attempt fails depends on the states
of previous packet exchange attempts. Furthermore, transition probabilities
between success ($l_{\mathrm{R}}(i)=0$) and failure ($l_{\mathrm{R}}(i)=1$)
states of packet exchange attempts are time-dependent. For instance,
the probability of packet exchange failure at time $\tau_{i+1}$ is
given by $p_{l_{\mathrm{R}}(i),1}(i)$. 

It is important to note that the time-inhomogeneous Markov chain characterization
of random packet losses generalizes the Bernoulli and \emph{time-homogeneous}
Markov chain models that are often used in the literature. \vskip 5pt

\subsubsection{Jamming Attacks\label{RemarkJammingAttacks} }

For jamming attacks coming from an intelligent attacker, a model capturing
the attack strategy of a malicious agent has been proposed in \cite{depersis2014}.
In that study, the sum of the length of attack durations is assumed
to be bounded by a certain ratio of total time.

We follow the approach of \cite{depersis2014} for modeling packet
exchange failures due to a jamming attack. Specifically, let $\{l_{\mathrm{J}}(i)\in\{0,1\}\}_{i\in\mathbb{N}_{0}}$
denote the state of jamming attacks. The state $l_{\mathrm{J}}(i)=1$
indicates that the network is subject to a jamming attack at time
$\tau_{i}$. We consider the case where the number of packet exchange
attempts that face jamming attacks are upper bounded almost surely
by a certain ratio of the total number of packet exchange attempts,
that is, $\{l_{\mathrm{J}}(i)\in\{0,1\}\}_{i\in\mathbb{N}_{0}}$ satisfies
\begin{align}
\mathbb{P}\big[\sum_{i=0}^{k-1}l_{\mathrm{J}}(i)\leq\kappa+\frac{k}{\tau}\big]=1,\quad k\in\mathbb{N},\label{eq:jammingcondition}
\end{align}
where $\kappa\ge0$ and $\tau>1$. In this characterization, among
$k$ packet exchange attempts, at most $\kappa+\frac{k}{\tau}$ of
them are affected by jamming attacks. The ratio $\frac{1}{\tau}$
corresponds to the notion \emph{jamming rate} discussed in \cite{anantharamu2011}.
Note that when $\kappa=0$, (\ref{eq:jammingcondition}) implies $l_{\mathrm{J}}(i)=0$,
$i\in\{0,\ldots,\lfloor\tau\rfloor\}$, almost surely. Scenarios that
involve possible jamming attacks during the first few packet exchange
attempts can be modeled by setting $\kappa>0$. Note that the characterization
in (\ref{eq:jammingcondition}) does not require $l_{\mathrm{J}}(i)$,
$i\in\mathbb{N}_{0}$, to follow a particular distribution. In fact,
$l_{\mathrm{J}}(\cdot)$ may be generated in a deterministic fashion,
or it may involve randomness. \vskip 5pt

\subsubsection{Combination of Random and Jamming-Related Losses}

In order to model the case where the network is subject to both random
losses and malicious jamming attacks, we define $\{l(i)\in\{0,1\}\}_{i\in\mathbb{N}_{0}}$
by 
\begin{align}
l(i) & =\begin{cases}
1,\quad & l_{\mathrm{R}}(i)=1\,\,\mathrm{or}\,\,l_{\mathrm{J}}(i)=1,\\
0,\quad & \mathrm{otherwise},
\end{cases}\,\,\,i\in\mathbb{N}_{0},\label{eq:ldefinitionforcombinedcase}
\end{align}
 where $\{l_{\mathrm{R}}(i)\in\{0,1\}\}_{i\in\mathbb{N}_{0}}$ is
a time-inhomogeneous Markov chain characterizing random packet losses
(see Section~\ref{RemarkRandomPacketLoss}) and $\{l_{\mathrm{J}}(i)\in\{0,1\}\}_{i\in\mathbb{N}_{0}}$
satisfying (\ref{eq:jammingcondition}) is a binary-valued process
that characterizes jamming attacks (see Section~\ref{RemarkJammingAttacks}). 

Proposition~\ref{PropositionCombinedCase} below provides a range
of values for $\rho\in(0,1)$ that satisfy Assumption~\ref{MainAssumption}
in the case that the network under consideration faces both random
packet losses and jamming attacks. In the proof of this result, we
utilize Lemma~\ref{KeyMarkovLemma} for obtaining upper bounds on
tail probabilities of sums $\sum_{i=0}^{k-1}l_{\mathrm{R}}(i)$ and
$\sum_{i=0}^{k-1}(1-l_{\mathrm{R}}(i))l_{\mathrm{J}}(i)$. 

\vskip 5pt

\begin{proposition} \label{PropositionCombinedCase} Consider the
packet loss indicator process $\{l(i)\in\{0,1\}\}_{i\in\mathbb{N}_{0}}$
given by (\ref{eq:ldefinitionforcombinedcase}) where $\{l_{\mathrm{R}}(i)\in\{0,1\}\}_{i\in\mathbb{N}_{0}}$
and $\{l_{\mathrm{J}}(i)\in\{0,1\}\}_{i\in\mathbb{N}_{0}}$ are mutually
independent. If there exist scalars $p_{0},p_{1}\in(0,1)$ such that
\begin{align}
p_{q,1}(i) & \leq p_{1},\label{eq:p1condition}\\
p_{q,0}(i) & \leq p_{0},\quad q\in\{0,1\},\quad i\in\mathbb{N}_{0},\label{eq:p0condition}\\
p_{1}+\frac{p_{0}}{\tau} & <1,
\end{align}
 hold, then for all $\rho\in(p_{1}+\frac{p_{0}}{\tau},1)$, there
exist $\gamma_{k}\in[0,\infty)$, $k\in\mathbb{N}$, that satisfy
(\ref{eq:lcond1}), (\ref{eq:lcond2}).

\end{proposition}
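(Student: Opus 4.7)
The plan is to exploit the identity $l(i) = l_R(i) + (1-l_R(i))\, l_J(i)$, which is valid because $l(i) = \max\{l_R(i), l_J(i)\}$ for values in $\{0,1\}$. This gives the decomposition
\begin{align*}
L(k) = S_R(k) + S_J(k), \qquad S_R(k) := \sum_{i=0}^{k-1} l_R(i), \qquad S_J(k) := \sum_{i=0}^{k-1} (1-l_R(i))\, l_J(i).
\end{align*}
Heuristically, $S_R(k)/k$ concentrates at a rate at most $p_1$ by the transition bound (\ref{eq:p1condition}). The second sum is restricted by $l_J$ to at most $\kappa + k/\tau$ nonzero indices, and on those indices the chain $1-l_R$ has mean rate at most $p_0$ by (\ref{eq:p0condition}), so its contribution to the rate of $L(k)/k$ is at most $p_0/\tau$. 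This matches the threshold $\rho > p_1 + p_0/\tau$ in the statement.

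To formalize this, fix $\epsilon>0$ small enough that $p_1 + \epsilon + (p_0+\epsilon)/\tau < \rho$, and set $\alpha_1 := p_1 + \epsilon$ and $\alpha_2 := (p_0+\epsilon)/\tau$. First, I would apply Lemma~\ref{KeyMarkovLemma} directly to the inhomogeneous Markov chain $l_R$ (using $p_{q,1}(i)\leq p_1$) to obtain a summable sequence $\gamma_k^{(1)}$ with $\mathbb{P}[S_R(k) > \alpha_1 k] \leq \gamma_k^{(1)}$. Second, I would condition on the $\sigma$-algebra $\mathcal{G} := \sigma(l_J(i) : i \in \mathbb{N}_0)$ generated by the jamming process. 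On the almost-sure event that $\sum_{i=0}^{k-1} l_J(i) \leq \kappa + k/\tau$, the sum $S_J(k)$ reduces to $\sum_{i \in \mathcal{I}_k} (1 - l_R(i))$, where the $\mathcal{G}$-measurable index set $\mathcal{I}_k := \{i < k : l_J(i) = 1\}$ has cardinality at most $\kappa + k/\tau$. Mutual independence of $l_R$ and $l_J$ ensures that the conditional law of $l_R$ given $\mathcal{G}$ coincides with its unconditional law, so Lemma~\ref{KeyMarkovLemma} applied to the chain $1 - l_R$ restricted to $\mathcal{I}_k$ (with $p_{q,0}(i) \leq p_0$) yields a conditional bound $\mathbb{P}[S_J(k) > \alpha_2 k + (p_0+\epsilon)\kappa \mid \mathcal{G}] \leq \gamma_k^{(2)}$ with $\sum_k \gamma_k^{(2)} < \infty$. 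Taking expectations and then applying a union bound yields $\mathbb{P}[L(k) > \rho k] \leq \gamma_k^{(1)} + \gamma_k^{(2)} =: \gamma_k$ for all $k$ large enough that the constant $(p_0+\epsilon)\kappa/k$ is smaller than $\rho - \alpha_1 - \alpha_2$; the finitely many smaller indices are absorbed by the trivial bound $\gamma_k := 1$, which preserves summability.

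The main obstacle is the restriction step: Lemma~\ref{KeyMarkovLemma} must be applicable to the chain $1 - l_R$ on a data-dependent subset $\mathcal{I}_k$ of size $O(k/\tau)$, and the resulting tail bound must still decay fast enough in $k$ itself (rather than only in $|\mathcal{I}_k|$) to remain summable once the effective sample size is compressed by the factor $1/\tau$. The mutual-independence hypothesis is essential for this argument, because otherwise an adversary could correlate attacks with the states of $l_R$, concentrating jamming on instants where $l_R(i)=0$ and thereby pushing the effective rate of $S_J(k)$ up toward $1/\tau$ instead of $p_0/\tau$, which would break the bound.
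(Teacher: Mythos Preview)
Your decomposition $L(k)=S_R(k)+S_J(k)$ and the union-bound split are exactly what the paper does, and your first application of Lemma~\ref{KeyMarkovLemma} (with $\xi=l_R$, $\chi\equiv 1$, $\tilde p=p_1$, $\tilde w=1$) coincides with the paper's treatment of $S_R$.

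For $S_J$, however, you are working harder than necessary. Lemma~\ref{KeyMarkovLemma} is already stated for sums of the form $\sum_{i=0}^{k-1}\xi(i)\chi(i)$ where $\xi$ is a Markov chain and $\chi$ is an \emph{independent} binary process satisfying $\mathbb{P}\big[\sum_{i<k}\chi(i)\le\tilde c+\tilde w k\big]=1$. The paper simply invokes the lemma a second time with $\xi(i)=1-l_R(i)$, $\chi(i)=l_J(i)$, $\tilde p=p_0$, $\tilde c=\kappa$, and $\tilde w=1/\tau$; hypothesis (\ref{eq:p0condition}) supplies the transition bound for $\xi$, and (\ref{eq:jammingcondition}) supplies the partial-sum bound for $\chi$. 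The lemma then returns a summable sequence $\gamma_k^{(2)}$ bounding $\mathbb{P}[S_J(k)>\rho_2 k]$ for any $\rho_2\in(p_0/\tau,\,1/\tau)$, valid for \emph{all} $k\in\mathbb{N}$. No conditioning on $\mathcal G$, no data-dependent index set, no additive constant $(p_0+\epsilon)\kappa$, and no separate treatment of small $k$ are needed: the $\kappa$ is already absorbed into the lemma's $\tilde c$. The ``main obstacle'' you flag---getting decay in $k$ rather than merely in $|\mathcal I_k|$---is precisely what the lemma guarantees once $\rho_2$ lies in $(\tilde p\tilde w,\tilde w)=(p_0/\tau,1/\tau)$; the paper secures this by choosing $\rho_2=p_0/\tau+\epsilon_2$ with $\epsilon_2=\min\{\epsilon/2,(1-p_0)/(2\tau)\}$.

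Your conditioning route would ultimately work (it essentially re-derives the $\chi$-handling performed inside the proof of Lemma~\ref{KeyMarkovLemma}), but it is a detour around a door that is already open.
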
 

\vskip 5pt

\begin{proof} It follows from (\ref{eq:ldefinitionforcombinedcase})
that 
\begin{align*}
l(i) & =l_{\mathrm{R}}(i)+(1-l_{\mathrm{R}}(i))l_{\mathrm{J}}(i),\quad i\in\mathbb{N}_{0},
\end{align*}
 and hence, by (\ref{eq:bigldefinition}), 
\begin{align}
L(k) & =\sum_{i=0}^{k-1}l_{\mathrm{R}}(i)+\sum_{i=0}^{k-1}(1-l_{\mathrm{R}}(i))l_{\mathrm{J}}(i),\quad k\in\mathbb{N}.\label{eq:Linusefulform}
\end{align}
Now, let $\epsilon\triangleq\rho-p_{1}-\frac{p_{0}}{\tau}$, $\epsilon_{2}\triangleq\min\{\frac{\epsilon}{2},\frac{1-p_{0}}{2\tau}\}$,
$\epsilon_{1}\triangleq\epsilon-\epsilon_{2}$, and define $\rho_{1}\triangleq p_{1}+\epsilon_{1}$,
$\rho_{2}\triangleq\frac{p_{0}}{\tau}+\epsilon_{2}$. Note that $\rho_{1}\in(p_{1},1)$
and $\rho_{2}\in(\frac{p_{0}}{\tau},\frac{1}{\tau})$. Furthermore,
let $L_{1}(k)\triangleq\sum_{i=0}^{k-1}l_{\mathrm{R}}(i)$ and $L_{2}(k)\triangleq\sum_{i=0}^{k-1}(1-l_{\mathrm{R}}(i))l_{\mathrm{J}}(i)$.
It then follows that 
\begin{align}
\mathbb{P}[L(k)>\rho k] & =\mathbb{P}[L_{1}(k)+L_{2}(k)>\rho_{1}k+\rho_{2}k]\nonumber \\
 & =1-\mathbb{P}[L_{1}(k)+L_{2}(k)\leq\rho_{1}k+\rho_{2}k]\nonumber \\
 & \leq1-\mathbb{P}[\left\{ L_{1}(k)\leq\rho_{1}k\right\} \cap\left\{ L_{2}(k)\leq\rho_{2}k\right\} ]\nonumber \\
 & =\mathbb{P}[\left\{ L_{1}(k)>\rho_{1}k\right\} \cup\left\{ L_{2}(k)>\rho_{2}k\right\} ]\nonumber \\
 & \leq\mathbb{P}[L_{1}(k)>\rho_{1}k]+\mathbb{P}[L_{2}(k)>\rho_{2}k].\label{eq:keyprobabilityinequality}
\end{align}
In the following we obtain upper bounds for the two probability terms
on the far right-hand side of (\ref{eq:keyprobabilityinequality})
using Lemma~\ref{KeyMarkovLemma}. 

First, to apply the lemma for the term $\mathbb{P}[L_{1}(k)>\rho_{1}k]$,
let $\tilde{p}=p_{1}$, $\tilde{c}=0$, $\tilde{w}=1$, and define
processes $\{\xi(i)\in\{0,1\}\}_{i\in\mathbb{N}_{0}}$ and $\{\chi(i)\in\{0,1\}\}_{i\in\mathbb{N}_{0}}$
by setting $\xi(i)\triangleq l_{\mathrm{R}}(i)$, $\chi(i)=1$, $i\in\mathbb{N}_{0}$.
Then, the conditions in (\ref{eq:xicond}) and (\ref{eq:chicond})
are satisfied. Now, since $L_{1}(k)=\sum_{i=0}^{k-1}\xi(i)\chi(i)$,
it follows from Lemma~\ref{KeyMarkovLemma} that 
\begin{align}
\mathbb{P}[L_{1}(k)>\rho_{1}k] & \leq\gamma_{k}^{(1)},\quad k\in\mathbb{N},\label{eq:pl1result}\\
\sum_{k\in\mathbb{N}}\gamma_{k}^{(1)} & <\infty,\label{eq:gamma1result}
\end{align}
 where $\gamma_{k}^{(1)}\triangleq\phi_{1}^{-\rho_{1}k+1}\frac{\left((\phi_{1}-1)p_{1}+1\right)^{k}-1}{(\phi-1)p_{1}}$
and $\phi_{1}\triangleq\frac{\rho_{1}(1-p_{1})}{p_{1}(1-\rho_{1})}$. 

Next, we use Lemma~\ref{KeyMarkovLemma} to find an upper bound for
$\mathbb{P}[L_{2}(k)>\rho_{2}k]$. Specifically, as a consequence
of (\ref{eq:jammingcondition}), conditions (\ref{eq:xicond}), (\ref{eq:chicond})
hold with $\tilde{p}=p_{0}$, $\tilde{c}=\kappa$, and $\tilde{w}=\frac{1}{\tau}$
together with processes $\{\xi(i)\in\{0,1\}\}_{i\in\mathbb{N}_{0}}$
and $\{\chi(i)\in\{0,1\}\}_{i\in\mathbb{N}_{0}}$ defined by setting
$\xi(i)\triangleq1-l_{\mathrm{R}}(i)$, $\chi(i)=l_{\mathrm{J}}(i)$,
$i\in\mathbb{N}_{0}$. Now, since $\xi(i)\triangleq1-l_{\mathrm{R}}(i)$
and $\chi(i)=l_{\mathrm{J}}(i)$, we have $L_{2}(k)=\sum_{i=0}^{k-1}\xi(i)\chi(i)$;
and hence, Lemma~\ref{KeyMarkovLemma} implies that 
\begin{align}
\mathbb{P}[L_{2}(k)>\rho_{2}k] & \leq\gamma_{k}^{(2)},\quad k\in\mathbb{N},\label{eq:pl2result}\\
\sum_{k\in\mathbb{N}}\gamma_{k}^{(2)} & <\infty,\label{eq:gamma2result}
\end{align}
 where $\gamma_{k}^{(2)}\triangleq\phi_{2}^{-\rho_{2}k+1}\frac{\left((\phi_{2}-1)p_{0}+1\right)^{\kappa+\frac{k}{\tau}}-1}{(\phi_{2}-1)p_{0}}$
and $\phi_{2}\triangleq\frac{\tau\rho_{2}(1-p_{0})}{p_{0}(1-\tau\rho_{2})}$. 

Now, let $\gamma_{k}\triangleq\gamma_{k}^{(1)}+\gamma_{k}^{(2)}$,
$k\in\mathbb{N}$. By using (\ref{eq:keyprobabilityinequality}),
(\ref{eq:pl1result}), and (\ref{eq:pl2result}), we obtain (\ref{eq:lcond1}).
Furthermore, as a consequence of (\ref{eq:gamma1result}) and (\ref{eq:gamma2result}),
we have (\ref{eq:lcond2}) as 
\begin{align*}
\sum_{k\in\mathbb{N}}\gamma_{k} & =\sum_{k\in\mathbb{N}}\gamma_{k}^{(1)}+\sum_{k\in\mathbb{N}}\gamma_{k}^{(2)}<\infty,
\end{align*}
 which completes the proof. \end{proof} \vskip 5pt

Note that in Proposition~\ref{PropositionCombinedCase}, $\{l_{\mathrm{R}}(i)\in\{0,1\}\}_{i\in\mathbb{N}_{0}}$
and $\{l_{\mathrm{J}}(i)\in\{0,1\}\}_{i\in\mathbb{N}_{0}}$ are assumed
to be mutually independent processes. In other words, packet exchange
attempt failures due to jamming attacks are assumed to be independent
of packet exchange attempt failures due to random packet losses. This
assumption would not be satisfied in the case that the malicious jamming
attacker has information of the past random packet losses in the communication
channel and utilizes this information in the attack strategy. Proposition~\ref{PropositionDependentCombinedCase}
below deals with such cases. \vskip 5pt

\begin{proposition} \label{PropositionDependentCombinedCase} Consider
the packet loss indicator process $\{l(i)\in\{0,1\}\}_{i\in\mathbb{N}_{0}}$.
Suppose there exists a scalar $p_{1}\in(0,1)$ such that (\ref{eq:p1condition})
and 
\begin{align}
p_{1}+\frac{1}{\tau} & <1,
\end{align}
hold. Then for all $\rho\in(p_{1}+\frac{1}{\tau},1)$, there exist
$\gamma_{k}\in[0,\infty)$, $k\in\mathbb{N}$, that satisfy (\ref{eq:lcond1}),
(\ref{eq:lcond2}). 

\end{proposition}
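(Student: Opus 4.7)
\textbf{Proof proposal for Proposition~\ref{PropositionDependentCombinedCase}.}
The plan is to avoid the splitting used in the proof of Proposition~\ref{PropositionCombinedCase}, because that decomposition relied on independence of $\{l_{\mathrm{R}}(i)\}$ and $\{l_{\mathrm{J}}(i)\}$ through Lemma~\ref{KeyMarkovLemma} applied to the second term $L_{2}(k)=\sum_{i=0}^{k-1}(1-l_{\mathrm{R}}(i))l_{\mathrm{J}}(i)$. Without independence we can no longer treat the two sources symmetrically. The key observation is that the jamming bound (\ref{eq:jammingcondition}) is an \emph{almost sure} (deterministic) bound on $\sum_{i=0}^{k-1}l_{\mathrm{J}}(i)$, which does not require any probabilistic assumption on how jamming is generated, so we can absorb the jamming contribution without invoking independence.

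First I would note that since $l(i),l_{\mathrm{R}}(i),l_{\mathrm{J}}(i)\in\{0,1\}$ and $l(i)=1$ iff $l_{\mathrm{R}}(i)=1$ or $l_{\mathrm{J}}(i)=1$, we have the pointwise inequality $l(i)\leq l_{\mathrm{R}}(i)+l_{\mathrm{J}}(i)$, hence
\begin{equation*}
L(k)\;\leq\;L_{1}(k)+\sum_{i=0}^{k-1}l_{\mathrm{J}}(i),\qquad L_{1}(k)\triangleq\sum_{i=0}^{k-1}l_{\mathrm{R}}(i).
\end{equation*}
Applying (\ref{eq:jammingcondition}) gives $L(k)\leq L_{1}(k)+\kappa+\frac{k}{\tau}$ almost surely, whence
\begin{equation*}
\mathbb{P}[L(k)>\rho k]\;\leq\;\mathbb{P}\bigl[L_{1}(k)>(\rho-\tfrac{1}{\tau})k-\kappa\bigr].
\end{equation*}

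Next, since $\rho-\tfrac{1}{\tau}>p_{1}$, I would fix any $\rho_{1}\in(p_{1},\rho-\tfrac{1}{\tau})$ and choose $k^{\ast}\in\mathbb{N}$ large enough that $(\rho-\tfrac{1}{\tau})k-\kappa\geq\rho_{1}k$ for all $k\geq k^{\ast}$. Then for $k\geq k^{\ast}$,
\begin{equation*}
\mathbb{P}[L(k)>\rho k]\;\leq\;\mathbb{P}[L_{1}(k)>\rho_{1}k],
\end{equation*}
and the right-hand side is bounded using Lemma~\ref{KeyMarkovLemma} exactly as in the first application inside the proof of Proposition~\ref{PropositionCombinedCase}: take $\tilde{p}=p_{1}$, $\tilde{c}=0$, $\tilde{w}=1$, $\xi(i)=l_{\mathrm{R}}(i)$, $\chi(i)=1$, which yields a sequence $\gamma_{k}^{(1)}\in[0,\infty)$ with $\mathbb{P}[L_{1}(k)>\rho_{1}k]\leq\gamma_{k}^{(1)}$ and $\sum_{k\in\mathbb{N}}\gamma_{k}^{(1)}<\infty$.

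Finally, I would patch up the small-$k$ indices by defining $\gamma_{k}\triangleq 1$ for $k<k^{\ast}$ and $\gamma_{k}\triangleq\gamma_{k}^{(1)}$ for $k\geq k^{\ast}$; this sequence satisfies (\ref{eq:lcond1}) trivially for $k<k^{\ast}$ (since probabilities are $\leq 1$) and by the inequality above for $k\geq k^{\ast}$, and it satisfies (\ref{eq:lcond2}) since only finitely many terms were modified. The only subtle point, which I expect to be the main conceptual step rather than a calculational obstacle, is recognizing that independence can be bypassed by using the a.s.\ bound (\ref{eq:jammingcondition}) deterministically and absorbing the $\kappa+k/\tau$ term into a slightly smaller threshold for $L_{1}(k)$; once that trick is in place the remainder of the argument reduces to the single-chain tail bound already used in Proposition~\ref{PropositionCombinedCase}.
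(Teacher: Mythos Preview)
Your argument is correct, and it is actually more direct than the paper's own proof. Both proofs start from the same pointwise bound $L(k)\leq\sum_{i=0}^{k-1}l_{\mathrm{R}}(i)+\sum_{i=0}^{k-1}l_{\mathrm{J}}(i)$, but from there they diverge. The paper splits the threshold as $\rho=\rho_{1}+\rho_{2}$ with $\rho_{2}>\tfrac{1}{\tau}$, applies a union bound to get $\mathbb{P}[L(k)>\rho k]\leq\mathbb{P}[\sum l_{\mathrm{R}}(i)>\rho_{1}k]+\mathbb{P}[\sum l_{\mathrm{J}}(i)>\rho_{2}k]$, handles the first term via Lemma~\ref{KeyMarkovLemma} exactly as you do, and bounds the second term through Markov's inequality together with $\mathbb{E}[e^{\sum l_{\mathrm{J}}(i)}]\leq e^{\kappa+k/\tau}$, yielding $\gamma_{k}^{(2)}=e^{\kappa-(\rho_{2}-1/\tau)k}$. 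You instead substitute the almost-sure bound (\ref{eq:jammingcondition}) directly into the inequality for $L(k)$, which eliminates the jamming contribution deterministically and reduces the problem to a single application of Lemma~\ref{KeyMarkovLemma}; the price is the finite patch $\gamma_{k}=1$ for $k<k^{\ast}$, which is harmless. Your route is shorter and avoids the somewhat circuitous Markov-inequality step (indeed, under (\ref{eq:jammingcondition}) the event $\{\sum l_{\mathrm{J}}(i)>\rho_{2}k\}$ has probability zero for all large $k$, so the exponential bound $\gamma_{k}^{(2)}$ is more than needed). The paper's approach, on the other hand, produces an explicit $\gamma_{k}$ valid for every $k\in\mathbb{N}$ without a case distinction.
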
 \vskip 5pt

\begin{proof}It follows from (\ref{eq:ldefinitionforcombinedcase})
that 
\begin{align*}
L(k) & \leq\sum_{i=0}^{k-1}l_{\mathrm{R}}(i)+\sum_{i=0}^{k-1}l_{\mathrm{J}}(i),\quad k\in\mathbb{N}.
\end{align*}
Let $\epsilon\triangleq\rho-p_{1}-\frac{1}{\tau}$, and define $\rho_{1}\triangleq p_{1}+\frac{\epsilon}{2}$,
$\rho_{2}\triangleq\frac{1}{\tau}+\frac{\epsilon}{2}$. Using arguments
similar to the ones used for obtaining (\ref{eq:keyprobabilityinequality})
in the proof of Proposition~\ref{PropositionCombinedCase}, we obtain
\begin{align}
 & \mathbb{P}[L(k)>\rho k]\leq\mathbb{P}[\sum_{i=0}^{k-1}l_{\mathrm{R}}(i)+\sum_{i=0}^{k-1}l_{\mathrm{J}}(i)>\rho k]\nonumber \\
 & \,\,\leq\mathbb{P}[\sum_{i=0}^{k-1}l_{\mathrm{R}}(i)>\rho_{1}k]+\mathbb{P}[\sum_{i=0}^{k-1}l_{\mathrm{J}}(i)>\rho_{2}k],\,\,k\in\mathbb{N}.\label{eq:separatedprobabilityterms}
\end{align}
The terms $\mathbb{P}[\sum_{i=0}^{k-1}l_{\mathrm{R}}(i)>\rho_{1}k]$
and $\mathbb{P}[\sum_{i=0}^{k-1}l_{\mathrm{J}}(i)>\rho_{2}k]$ in
(\ref{eq:separatedprobabilityterms}) respectively correspond to random
packet losses and packet losses due to jamming attacks. First, it
is shown in the proof of Proposition~\ref{PropositionCombinedCase}
that $\mathbb{P}[\sum_{i=0}^{k-1}l_{\mathrm{R}}(i)>\rho_{1}k]\leq\gamma_{k}^{(1)}$,
where $\gamma_{k}^{(1)}\triangleq\phi_{1}^{-\rho_{1}k+1}\frac{\left((\phi_{1}-1)p_{1}+1\right)^{k}-1}{(\phi-1)p_{1}}$
and $\phi_{1}\triangleq\frac{\rho_{1}(1-p_{1})}{p_{1}(1-\rho_{1})}$.
Moreover, using Markov's inequality we obtain 
\begin{align}
 & \mathbb{P}[\sum_{i=0}^{k-1}l_{\mathrm{J}}(i)>\rho_{2}k]\leq\mathbb{P}[\sum_{i=0}^{k-1}l_{\mathrm{J}}(i)\geq\rho_{2}k]\nonumber \\
 & \quad=\mathbb{P}[e^{\sum_{i=0}^{k-1}l_{\mathrm{J}}(i)}\geq e^{\rho_{2}k}]\leq e^{-\rho_{2}k}\mathbb{E}[e^{\sum_{i=0}^{k-1}l_{\mathrm{J}}(i)}],\label{eq:markovsineqforattack}
\end{align}
for $k\in\mathbb{N}$. By (\ref{eq:jammingcondition}), we have $\mathbb{E}[e^{\sum_{i=0}^{k-1}l_{\mathrm{J}}(i)}]\leq\mathbb{E}[e^{\kappa+\frac{k}{\tau}}]=e^{\kappa+\frac{k}{\tau}}$.
Therefore, it follows from (\ref{eq:markovsineqforattack}) that $\mathbb{P}[\sum_{i=0}^{k-1}l_{\mathrm{J}}(i)>\rho_{2}k]\leq\gamma_{k}^{(2)}$,
where $\gamma_{k}^{(2)}\triangleq e^{\kappa-(\rho_{2}-\frac{1}{\tau})k}$,
$k\in\mathbb{N}$. 

Now, let $\gamma_{k}\triangleq\gamma_{k}^{(1)}+\gamma_{k}^{(2)}$,
$k\in\mathbb{N}$. Using (\ref{eq:separatedprobabilityterms}), we
obtain (\ref{eq:lcond1}). Moreover, (\ref{eq:lcond2}) holds since
$\sum_{k\in\mathbb{N}}\gamma_{k}^{(1)}<\infty$, $\sum_{k\in\mathbb{N}}\gamma_{k}^{(2)}<\infty$.
\end{proof} \vskip 5pt

In comparison with Proposition~\ref{PropositionCombinedCase}, Proposition~\ref{PropositionDependentCombinedCase}
provides a more restricted range of values for $\rho$ that satisfy
Assumption~\ref{MainAssumption}. The interpretation may be that
the malicious jamming attacker is more knowledgeable and may use information
of the past random packet losses in the attack strategy.

\section{Conditions for Almost-Sure Asymptotic Stability of the Networked
Control System \label{sec:Conditions-for-Almost-Sure}}

In this section, we investigate stability of the closed-loop event-triggered
networked control system (\ref{eq:system})--(\ref{eq:controlinputatplantside}),
which is a stochastic dynamical system due to probabilistic characterization
of packet losses. Below we define almost sure asymptotic stability
for stochastic dynamical systems. 

\begin{definition}The zero solution $x(t)\equiv0$ of a stochastic
system is \emph{almost surely stable} if, for all $\epsilon>0$ and
$\bar{p}>0$, there exists $\delta=\delta(\epsilon,\bar{p})>0$ such
that if $\|x(0)\|<\delta$, then 
\begin{align}
\mathbb{P}[\sup_{t\in\mathbb{N}_{0}}\|x(t)\|>\epsilon] & <\bar{p}.
\end{align}
 Moreover, the zero solution $x(t)\equiv0$ is \emph{asymptotically
stable almost surely} if it is almost surely stable and 
\begin{align}
\mathbb{P}[\lim_{t\to\infty}\|x(t)\|=0] & =1.\label{eq:definition-convergence}
\end{align}

\end{definition}

\vskip 10pt

In Theorem~\ref{TheoremMain} below, we present sufficient conditions
for almost sure asymptotic stability of the zero solution of the dynamical
system (\ref{eq:system})--(\ref{eq:controlinputatplantside}). 

\begin{theorem} \label{TheoremMain}

Consider the linear dynamical system (\ref{eq:system}). Suppose that
the process $\{l(i)\in\{0,1\}\}_{i\in\mathbb{N}_{0}}$ characterizing
packet exchange failures in the network satisfies Assumption~\ref{MainAssumption}
with scalar $\rho\in[0,1]$. If there exist a matrix $K\in\mathbb{R}^{m\times n}$,
a positive-definite matrix $P\in\mathbb{R}^{n\times n}$, and scalars
$\beta\in(0,1),$ $\varphi\in[1,\infty)$ such that 
\begin{align}
 & \beta P-\left(A+BK\right)^{\mathrm{T}}P\left(A+BK\right)\geq0,\label{eq:betacond}\\
 & \varphi P-A^{\mathrm{T}}PA\geq0,\label{eq:varphicond}\\
 & (1-\rho)\ln\beta+\rho\ln\varphi<0,\label{eq:betaandvarphicond}
\end{align}
then the event-triggered control law (\ref{eq:attemptedpacketexchangetimes}),
(\ref{eq:controlinputatplantside}) guarantees almost sure asymptotic
stability of the zero solution $x(t)\equiv0$ of the closed-loop system
dynamics. 

\end{theorem}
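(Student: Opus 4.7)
The plan is to track $V(x(\tau_i))$ along the sequence of packet exchange attempt instants, derive a one-step inequality driven by the success indicator $l(i)$, iterate it to a bound of the form $V(x(\tau_k)) \leq \beta^{k-L(k)} \varphi^{L(k)} V(x_0)$, and then invoke Lemma~\ref{key-lemma1} together with (\ref{eq:betaandvarphicond}) to close the argument.

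The first and most delicate step is to establish the single-interval bounds: for every $i$, $V(x(\tau_{i+1})) \leq \beta V(x(\tau_i))$ when $l(i)=0$, $V(x(\tau_{i+1})) \leq \varphi V(x(\tau_i))$ when $l(i)=1$, and the uniform in-interval bound $V(x(t)) \leq \varphi V(x(\tau_i))$ for every $t \in [\tau_i,\tau_{i+1}]$. I would split on whether $\tau_{i+1} = \tau_i+1$ or $\tau_{i+1} \geq \tau_i+2$. In the former case $x(\tau_{i+1})$ equals $(A+BK)x(\tau_i)$ or $Ax(\tau_i)$ according to $l(i)$, and conditions (\ref{eq:betacond})/(\ref{eq:varphicond}) give the bounds immediately. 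In the latter case, since $u$ is constant on $[\tau_i,\tau_{i+1})$ the identity $Ax(t)+Bu(\tau_i) = x(t+1)$ holds there; by the minimality in (\ref{eq:attemptedpacketexchangetimes}) the Lyapunov part of the trigger fails at every $t \in \{\tau_i+1,\ldots,\tau_{i+1}-1\}$, so $V(x(s)) \leq \beta V(x(\tau_i))$ for $s \in \{\tau_i+2,\ldots,\tau_{i+1}\}$, while $V(x(\tau_i+1))$ is again handled through (\ref{eq:betacond}) or (\ref{eq:varphicond}). Since $\beta < 1 \leq \varphi$, the uniform bound follows.

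Iterating yields $V(x(\tau_k)) \leq \beta^{k-L(k)} \varphi^{L(k)} V(x_0)$, equivalently $k^{-1} \ln(V(x(\tau_k))/V(x_0)) \leq (1 - L(k)/k)\ln\beta + (L(k)/k)\ln\varphi$. By Lemma~\ref{key-lemma1}, $\limsup_{k\to\infty} L(k)/k \leq \rho$ almost surely; since the right-hand side is monotone nondecreasing in $L(k)/k$ (because $\ln\varphi \geq 0 \geq \ln\beta$), condition (\ref{eq:betaandvarphicond}) forces $V(x(\tau_k)) \to 0$ geometrically almost surely. Combined with the uniform in-interval bound and positive definiteness of $P$, this gives $\lim_{t\to\infty} \|x(t)\| = 0$ almost surely, which is (\ref{eq:definition-convergence}). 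For the $\epsilon$--$\bar{p}$ part of almost sure stability, I would set $Z \triangleq \sup_{k\in\mathbb{N}_0} \beta^{k-L(k)} \varphi^{L(k)}$; this is almost surely finite and its law depends only on $\{l(i)\}$, not on $x_0$. Given $\epsilon,\bar{p}>0$, pick $M$ with $\mathbb{P}[Z>M] \leq \bar{p}$ and then choose $\delta$ small enough that $\|x_0\|<\delta$ implies $\varphi M \lambda_{\max}(P) \|x_0\|^2/\lambda_{\min}(P) < \epsilon^2$; this bounds $\|x(t)\|$ by $\epsilon$ for all $t$ on the event $\{Z \leq M\}$.

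The main obstacle I expect is the first step: the triggering rule (\ref{eq:attemptedpacketexchangetimes}) mixes a Lyapunov-level threshold with a periodic time-out, and extracting a clean single-interval bound requires careful case analysis on whether $\tau_{i+1} = \tau_i + 1$ and on the value of $l(i)$, together with the correct interpretation of $Ax(t) + Bu(\tau_i)$ as $x(t+1)$ on the attempt interval. Once that one-step inequality is in hand, the remainder is a deterministic algebraic manipulation plus the Borel--Cantelli-type almost sure statement already packaged in Lemma~\ref{key-lemma1}.
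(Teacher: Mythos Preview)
Your proposal is correct and follows essentially the same route as the paper: the one-step bounds on $V(x(\tau_{i+1}))$ via the case analysis on $l(i)$ and on whether $\tau_{i+1}=\tau_i+1$, their iteration to $V(x(\tau_k)) \leq \beta^{k-L(k)}\varphi^{L(k)} V(x_0)$, and the appeal to Lemma~\ref{key-lemma1} together with (\ref{eq:betaandvarphicond}) are exactly how the paper proceeds. Your almost-sure-stability argument via the single random variable $Z = \sup_{k}\beta^{k-L(k)}\varphi^{L(k)}$ is a mild repackaging of the paper's version (which instead splits time into a deterministically bounded initial segment and a probabilistically controlled tail), but the content is the same.
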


\begin{proof}The proof is composed of three steps. In the initial
step, we obtain an inequality concerning the evolution of the Lyapunov-like
function $V(\cdot)$ defined by $V(x)\triangleq x^{\mathrm{T}}Px$,
$x\in\mathbb{R}^{n}$. Then, by utilizing this inequality, we will
establish almost sure stability, and then finally we show almost sure
asymptotic stability of the closed-loop system. 

First, we use (\ref{eq:system}) and (\ref{eq:controlinputatplantside})
together with $V(\cdot)$ to obtain 
\begin{align}
V(x(\tau_{i}+1)) & =x^{\mathrm{T}}(\tau_{i})\left(A+\left(1-l(i)\right)BK\right)^{\mathrm{T}}P\nonumber \\
 & \quad\,\cdot\left(A+\left(1-l(i)\right)BK\right)x(\tau_{i}),\,i\in\mathbb{N}_{0}.\label{eq:vattauplush}
\end{align}
 Now, for the case $l(i)=0$, (\ref{eq:betacond}) and (\ref{eq:vattauplush})
imply 
\begin{align}
V(x(\tau_{i}+1)) & =x^{\mathrm{T}}(\tau_{i})\left(A+BK\right)^{\mathrm{T}}P\left(A+BK\right)x(\tau_{i})\nonumber \\
 & \leq\beta x^{\mathrm{T}}(\tau_{i})Px(\tau_{i}).\label{eq:betaineq}
\end{align}
 Since $\tau_{i+1}\geq\tau_{i}+1$, it follows from (\ref{eq:attemptedpacketexchangetimes})
and (\ref{eq:betaineq}) that 
\begin{align}
V(x(t)) & \leq\beta x^{\mathrm{T}}(\tau_{i})Px(\tau_{i})\nonumber \\
 & =\beta V(x(\tau_{i})),\quad t\in\{\tau_{i}+1,\ldots,\tau_{i+1}\}.\label{eq:betaresult}
\end{align}

On the other hand, for the case $l(i)=1$, it follows from (\ref{eq:varphicond})
and (\ref{eq:vattauplush}) that 
\begin{align}
V(x(\tau_{i}+1)) & =x^{\mathrm{T}}(\tau_{i})A^{\mathrm{T}}PAx(\tau_{i})\leq\varphi x^{\mathrm{T}}(\tau_{i})Px(\tau_{i}).\label{eq:varphiineq}
\end{align}
Now if $\tau_{i+1}=\tau_{i}+1$, we have $V(x(\tau_{i+1}))\leq\varphi V(x(\tau_{i}))$
due to (\ref{eq:varphiineq}). If, on the other hand, $\tau_{i+1}>\tau_{i}+1$,
it means that $V(x(t))\leq\beta V(x(\tau_{i}))$ for $t\in\{\tau_{i}+2,\ldots,\tau_{i+1}\}$.
Therefore, since $\beta\leq\varphi$, 
\begin{align}
V(x(t)) & \leq\varphi V(x(\tau_{i})),\quad t\in\{\tau_{i}+1,\ldots,\tau_{i+1}\}.\label{eq:varphiresult}
\end{align}

Using (\ref{eq:betaresult}) and (\ref{eq:varphiresult}) we obtain
\begin{align}
V(x(\tau_{i+1})) & \leq(1-l(i))\beta V(x(\tau_{i}))+l(i)\varphi V(x(\tau_{i})),\label{eq:vineq}
\end{align}
 for $i\in\mathbb{N}_{0}$. Note that the inequality given in (\ref{eq:vineq})
characterizes an upper bound on the growth of the Lyapunov function
candidate $V(\cdot)$. 

Now, let $\eta(k)\triangleq\prod_{i=0}^{k-1}\left((1-l(i))\beta+l(i)\varphi\right)$.
It follows from (\ref{eq:vineq}) that 
\begin{align}
V(x(\tau_{k})) & \leq\eta(k)V(x(0)),\label{eq:vetaineq}
\end{align}
for $k\in\mathbb{N}$. Furthermore, since $\ln\left((1-q)\beta+q\varphi\right)=(1-q)\ln\beta+q\ln\varphi$
for $q\in\{0,1\}$, we have 
\begin{align*}
\ln\eta(k) & =\sum_{i=0}^{k-1}\ln\left((1-l(i))\beta+l(i)\varphi\right)\\
 & =\sum_{i=0}^{k-1}(1-l(i))\ln\beta+\sum_{i=0}^{k-1}l(i)\ln\varphi\\
 & =(k-L(k))\ln\beta+L(k)\ln\varphi,
\end{align*}
 where $L(k)=\sum_{i=0}^{k-1}l(i)$ by (\ref{eq:bigldefinition}).
Now since $\beta\in(0,1)$, and $\varphi\in[1,\infty)$, it follows
from Lemma~\ref{key-lemma1} that 
\begin{align*}
\limsup_{k\to\infty}\frac{\ln\eta(k)}{k} & =\limsup_{k\to\infty}\frac{1}{k}\left((k-L(k))\ln\beta+L(k)\ln\varphi\right)\\
 & \leq(1-\rho)\ln\beta+\rho\ln\varphi,
\end{align*}
 almost surely. Therefore, it follows from (\ref{eq:betaandvarphicond})
that 
\begin{align*}
\mathbb{P}[\limsup_{k\to\infty}\frac{\ln\eta(k)}{k} & <0]=1.
\end{align*}
As a consequence, $\lim_{k\to\infty}\ln\eta(k)=-\infty$, and hence,
$\lim_{k\to\infty}\eta(k)=0$, almost surely. Thus, for all $\epsilon>0$,
\begin{align*}
\lim_{j\to\infty}\mathbb{P}[\sup_{k\geq j}\eta(k) & >\epsilon^{2}]=0,
\end{align*}
and therefore, for all $\epsilon>0$ and $\bar{p}>0$, there exists
a positive integer $N(\epsilon,\bar{p})$ such that 
\begin{align}
\mathbb{P}[\sup_{k\geq j}\eta(j) & >\epsilon^{2}]<\bar{p},\quad j\geq N(\epsilon,\bar{p}).\label{eq:supetainequalit}
\end{align}

In what follows, we will employ (\ref{eq:vetaineq}) and (\ref{eq:supetainequalit})
to show almost sure stability of the closed-loop system. Note that
(\ref{eq:betaresult}), (\ref{eq:varphiresult}), and $\varphi\geq1>\beta$
imply that 
\begin{align*}
V(x(t+1)) & \leq\varphi V(x(t)),\,t\in\{\tau_{i},\ldots,\tau_{i+1}-1\},
\end{align*}
 for $i\in\mathbb{N}_{0}$. Since, $\|x\|^{2}\leq\frac{1}{\lambda_{\min}(P)}V(x)$
and $V(x)\leq\lambda_{\max}(P)\|x\|^{2}$, $x\in\mathbb{R}^{n}$,
we have 
\begin{align}
\|x(t)\|^{2} & \leq\varphi\nu\|x(\tau_{i})\|^{2},\,t\in\{\tau_{i},\ldots,\tau_{i+1}-1\},\label{eq:xtnuineq}
\end{align}
for $i\in\mathbb{N}_{0}$, where $\nu\triangleq\frac{\lambda_{\max}(P)}{\lambda_{\min}(P)}$. 

Now, let $\mathcal{T}_{k}\triangleq\{\tau_{k},\ldots,\tau_{k+1}-1\}$,
$k\in\mathbb{N}_{0}$. Then by using (\ref{eq:vetaineq}) and (\ref{eq:xtnuineq}),
we obtain $\eta(k)\geq\frac{V(x(\tau_{k}))}{V(x(0))}\geq\frac{\lambda_{\min}(P)}{\lambda_{\max}(P)}\frac{\|x(\tau_{k})\|^{2}}{\|x(0)\|^{2}}\geq\frac{1}{\nu^{2}\varphi}\frac{\|x(t)\|^{2}}{\|x(0)\|^{2}}$,
for all $t\in\mathcal{T}_{k}$, $k\in\mathbb{N}$. Hence, $\eta(k)\geq\frac{1}{\nu^{2}\varphi}\frac{\max_{t\in\mathcal{T}_{k}}\|x(t)\|^{2}}{\|x(0)\|^{2}}$,
$k\in\mathbb{N}$. By (\ref{eq:supetainequalit}), it follows that
for all $\epsilon>0$ and $\bar{p}>0$, 
\begin{align*}
 & \mathbb{P}[\sup_{k\geq j}\max_{t\in\mathcal{T}_{k}}\|x(t)\|>\epsilon\nu\sqrt{\varphi}\|x(0)\|]\\
 & \quad=\mathbb{P}[\sup_{k\geq j}\max_{t\in\mathcal{T}_{k}}\|x(t)\|^{2}>\epsilon^{2}\nu^{2}\varphi\|x(0)\|^{2}]\\
 & \quad=\mathbb{P}[\sup_{k\geq j}\frac{1}{\nu^{2}\varphi}\frac{\max_{t\in\mathcal{T}_{k}}\|x(t)\|^{2}}{\|x(0)\|^{2}}>\epsilon^{2}]\\
 & \quad\leq\mathbb{P}[\sup_{k\geq j}\eta(k)>\epsilon^{2}]\\
 & \quad<\bar{p},\quad j\geq N(\epsilon,\bar{p}).
\end{align*}
We now define $\delta_{1}\triangleq\frac{1}{\nu\sqrt{\varphi}}$.
Note that if $\|x(0)\|\leq\delta_{1}$, then (since $\nu\sqrt{\varphi}\|x(0)\|\leq1$)
we have 
\begin{align}
 & \mathbb{P}[\sup_{k\geq j}\max_{t\in\mathcal{T}_{k}}\|x(t)\|>\epsilon]\nonumber \\
 & \quad\leq\mathbb{P}[\sup_{k\geq j}\max_{t\in\mathcal{T}_{k}}\|x(t)\|>\epsilon\nu\sqrt{\varphi}\|x(0)\|]\nonumber \\
 & \quad<\bar{p},\quad j\geq N(\epsilon,\bar{p}).\label{eq:epsilon-result-part1}
\end{align}
 On the other hand, since $\varphi\geq1>\beta$, it follows from (\ref{eq:vineq})
that $V(x(\tau_{k}))\leq\varphi^{k}V(x(0))\leq\varphi^{N(\epsilon,\bar{p})-1}V(x(0))$
for all $k\in\{0,1,\ldots,N(\epsilon,\bar{p})-1\}$. Therefore, $\|x(\tau_{k})\|^{2}\leq\varphi^{N(\epsilon,\bar{p})-1}\frac{\lambda_{\max}(P)}{\lambda_{\min}(P)}\|x(0)\|^{2}=\varphi^{N(\epsilon,\bar{p})-1}\nu\|x(0)\|^{2}$.
Furthermore, as a result of (\ref{eq:xtnuineq}), 
\begin{align*}
\max_{t\in\mathcal{T}_{k}}\|x(t)\|^{2} & \leq\varphi\nu\|x(\tau_{k})\|^{2}\leq\nu^{2}\varphi^{N(\epsilon,\bar{p})}\|x(0)\|^{2},
\end{align*}
 and hence, $\max_{t\in\mathcal{T}_{k}}\|x(t)\|\leq\nu\sqrt{\varphi^{N(\epsilon,\bar{p})}}\|x(0)\|$
for all $k\in\{0,1,\ldots,N(\epsilon,\bar{p})-1\}$. Let $\delta_{2}\triangleq\epsilon\nu^{-1}\sqrt{\varphi^{-N(\epsilon,\bar{p})}}$.
Now, if $\|x(0)\|\leq\delta_{2}$, then $\max_{t\in\mathcal{T}_{k}}\|x(t)\|\leq\epsilon$,
$k\in\{0,1,\ldots,N(\epsilon,\bar{p})-1\}$, which implies 
\begin{eqnarray}
\mathbb{P}[\max_{k\in\{0,1,\ldots,N(\epsilon,\bar{p})\}}\max_{t\in\mathcal{T}_{k}}\|x(t)\|>\epsilon] & = & 0.\label{eq:epsilon-result-part2}
\end{eqnarray}
It follows from (\ref{eq:epsilon-result-part1}) and (\ref{eq:epsilon-result-part2})
that for all $\epsilon>0$, $\bar{p}>0$, 
\begin{align*}
 & \mathbb{P}[\sup_{t\in\mathbb{N}_{0}}\|x(t)\|>\epsilon]=\mathbb{P}[\sup_{k\in\mathbb{N}_{0}}\max_{t\in\mathcal{T}_{k}}\|x(t)\|>\epsilon]\\
 & \quad=\mathbb{P}[\{\max_{k\in\{0,1,\ldots,N(\epsilon,\bar{p})-1\}}\max_{t\in\mathcal{T}_{k}}\|x(t)\|>\epsilon\}\\
 & \quad\quad\quad\cup\,\{\sup_{k\geq N(\epsilon,\bar{p})}\max_{t\in\mathcal{T}_{k}}\|x(t)\|>\epsilon\}]\\
 & \quad\leq\mathbb{P}[\max_{k\in\{0,1,\ldots,N(\epsilon,\bar{p})-1\}}\max_{t\in\mathcal{T}_{k}}\|x(t)\|>\epsilon]\\
 & \quad\quad+\mathbb{P}[\sup_{k\geq N(\epsilon,\bar{p})}\max_{t\in\mathcal{T}_{k}}\|x(t)\|>\epsilon]\\
 & \quad<\bar{p},
\end{align*}
 whenever $\|x(0)\|<\delta\triangleq\min(\delta_{1},\delta_{2})$,
which implies almost sure stability. 

Now in order to establish almost sure \emph{asymptotic} stability
of the zero solution, it remains to show (\ref{eq:definition-convergence}).
To this end, note that $\mathbb{P}[\lim_{k\to\infty}\eta(\tau_{k})=0]=1$.
It follows from (\ref{eq:vetaineq}) that $\mathbb{P}[\lim_{k\to\infty}V(x(\tau_{k}))=0]=1$,
which implies (\ref{eq:definition-convergence}). Hence the zero solution
of the closed-loop system (\ref{eq:system}), (\ref{eq:attemptedpacketexchangetimes}),
(\ref{eq:controlinputatplantside}) is asymptotically stable almost
surely.\end{proof} \vskip 5pt

Theorem~\ref{TheoremMain} provides a sufficient condition under
which the event-triggered control law (\ref{eq:attemptedpacketexchangetimes}),
(\ref{eq:controlinputatplantside}) guarantees almost sure asymptotic
stability of the linear dynamical system (\ref{eq:system}) for the
case packet losses satisfy Assumption~\ref{MainAssumption}. Note
that the scalars $\beta\in(0,1)$ and $\varphi\in[1,\infty)$ in conditions
(\ref{eq:betacond}) and (\ref{eq:varphicond}) characterize upper
bounds on the growth of a Lyapunov-like function. Specifically, when
a packet exchange attempt between the plant and the controller is
successful at time $\tau_{i}$, the condition (\ref{eq:betacond})
together with (\ref{eq:attemptedpacketexchangetimes}) guarantees
that $V(x(\tau_{i+1}))\leq\beta V(x(\tau_{i}))$. On the other hand,
if a packet exchange attempt between the plant and the controller
is unsuccessful at time $\tau_{i}$, it follows from (\ref{eq:varphicond})
and (\ref{eq:attemptedpacketexchangetimes}) that $V(x(\tau_{i+1}))\leq\varphi V(x(\tau_{i}))$.
If unsuccessful packet exchange attempts are sufficiently statistically
rare (successful packet exchanges happen statistically frequently)
such that condition (\ref{eq:betaandvarphicond}) is satisfied, then
the closed-loop system stability is guaranteed. \vskip 5pt

Note that the analysis for the closed-loop system stability is technically
involved partly due to the general characterization in Assumption~\ref{MainAssumption},
which captures not only random packet losses but jamming attacks as
well. If we consider only the case of random packet losses, we may
employ methods from discrete-time Markov jump systems theory \cite{costa2004discrete}
for obtaining conditions of stability. On the other hand packet losses
due to malicious jamming attacks (Section~\ref{RemarkJammingAttacks})
cannot be described using Markov processes. Stability of a system
under jamming attacks is explored in \cite{depersis2014}, where the
analysis relies on a deterministic approach for obtaining an exponentially
decreasing upper bound for the norm of the state. In contrast, in
our analysis, after establishing that both random losses and jamming
attacks allow a probabilistic characterization, we use tools from
probability theory. Specifically, we find a stochastic upper bound
for a Lyapunov-like function and show that this stochastic upper bound
tends to zero even though it may increase at certain times.

\subsection{Feedback Gain Design for Event-Triggered Control}

In the following, we outline a numerical method for designing the
feedback gain $K\in\mathbb{R}^{m\times n}$, as well as the positive-definite
matrix $P\in\mathbb{R}^{n\times n}$ and the scalar $\beta\in(0,1)$
used in the event-triggered control law (\ref{eq:attemptedpacketexchangetimes}),
(\ref{eq:controlinputatplantside}). 

\begin{corollary} \label{Corollary} Consider the linear dynamical
system (\ref{eq:system}). Suppose that the process $\{l(i)\in\{0,1\}\}_{i\in\mathbb{N}_{0}}$
characterizing packet exchange failures in the network satisfies Assumption~\ref{MainAssumption}
with scalar $\rho\in[0,1]$. If there exist a matrix $M\in\mathbb{R}^{m\times n}$,
a positive-definite matrix $Q\in\mathbb{R}^{n\times n}$, and scalars
$\beta\in(0,1),$ $\varphi\in[1,\infty)$ such that (\ref{eq:betaandvarphicond}),
\begin{align}
\left[\begin{array}{cc}
\beta Q & \left(AQ+BM\right)^{\mathrm{T}}\\
AQ+BM & Q
\end{array}\right] & \geq0,\label{eq:corolcond1}\\
\left[\begin{array}{cc}
\varphi Q & (AQ){}^{\mathrm{T}}\\
AQ & Q
\end{array}\right] & \geq0,\label{eq:corolcond2}
\end{align}
hold, then the event-triggered control law (\ref{eq:attemptedpacketexchangetimes}),
(\ref{eq:controlinputatplantside}) with $P\triangleq Q^{-1}$ and
$K\triangleq MQ^{-1}$ guarantees almost sure asymptotic stability
of the zero solution $x(t)\equiv0$ of the closed-loop system dynamics. 

\end{corollary}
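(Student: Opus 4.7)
The plan is to show that the LMI-style conditions (\ref{eq:corolcond1}) and (\ref{eq:corolcond2}) are, under the substitutions $P \triangleq Q^{-1}$ and $K \triangleq MQ^{-1}$, equivalent to the hypotheses (\ref{eq:betacond}) and (\ref{eq:varphicond}) of Theorem~\ref{TheoremMain}. Once this equivalence is established, the conclusion follows by a direct application of Theorem~\ref{TheoremMain}, since (\ref{eq:betaandvarphicond}) is already assumed and the packet-loss process is assumed to satisfy Assumption~\ref{MainAssumption}.

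First I would handle (\ref{eq:corolcond1}). Since $Q>0$, the Schur complement lemma gives that (\ref{eq:corolcond1}) is equivalent to
\begin{align*}
\beta Q - (AQ+BM)^{\mathrm{T}} Q^{-1} (AQ+BM) \geq 0.
\end{align*}
Substituting $M = KQ$ (which is legitimate because $K \triangleq MQ^{-1}$ and $Q$ is invertible), we obtain $AQ + BM = (A+BK)Q$, so the inequality becomes
\begin{align*}
\beta Q - Q(A+BK)^{\mathrm{T}} Q^{-1} (A+BK) Q \geq 0.
\end{align*}
Performing a congruence transformation by $Q^{-1}$ on both sides (which preserves positive semidefiniteness) and using $P = Q^{-1}$ yields precisely (\ref{eq:betacond}).

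Next I would apply the same chain of reasoning to (\ref{eq:corolcond2}): the Schur complement reduces it to $\varphi Q - Q A^{\mathrm{T}} Q^{-1} A Q \geq 0$, and congruence by $Q^{-1}$ produces $\varphi P - A^{\mathrm{T}} P A \geq 0$, which is (\ref{eq:varphicond}). At this point, all three hypotheses of Theorem~\ref{TheoremMain} are in place: (\ref{eq:betacond}), (\ref{eq:varphicond}), and (the unchanged) (\ref{eq:betaandvarphicond}). Invoking Theorem~\ref{TheoremMain} with the matrices $P$ and $K$ defined above then gives almost sure asymptotic stability of the closed-loop system under the event-triggered law (\ref{eq:attemptedpacketexchangetimes}), (\ref{eq:controlinputatplantside}).

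There is no real obstacle in this argument; the only subtlety to state carefully is that the congruence transformation by $Q^{-1}$ is valid because $Q$ (and hence $Q^{-1}$) is symmetric positive-definite, so the rewritten inequalities are genuinely equivalent rather than merely implied. The value of the corollary is computational: the conditions (\ref{eq:corolcond1})--(\ref{eq:corolcond2}) are linear in the unknowns $(Q, M)$ for fixed $\beta, \varphi$, so stabilizing $P$ and $K$ can be found by solving LMIs (for example, by bisecting over $\beta$ and $\varphi$ subject to (\ref{eq:betaandvarphicond})), whereas the conditions of Theorem~\ref{TheoremMain} are nonlinear in $(P,K)$ because of the product $(A+BK)^{\mathrm{T}}P(A+BK)$.
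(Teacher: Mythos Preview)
Your proof is correct and follows essentially the same approach as the paper: apply Schur complements to (\ref{eq:corolcond1}) and (\ref{eq:corolcond2}), then perform a congruence by $Q^{-1}$ to recover (\ref{eq:betacond}) and (\ref{eq:varphicond}), and invoke Theorem~\ref{TheoremMain}. Your write-up is slightly more explicit about why the congruence is legitimate and about the computational motivation, but the argument is the same.
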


\begin{proof}Using Schur complements (see \cite{bernstein2009matrix}),
we transform (\ref{eq:corolcond1}) and (\ref{eq:corolcond2}), respectively,
into 
\begin{align}
\beta Q-\left(AQ+BM\right)^{\mathrm{T}}Q^{-1}\left(AQ+BM\right) & \geq0,\label{eq:oldcorolcond1}\\
\varphi Q-(AQ)^{\mathrm{T}}Q^{-1}AQ & \geq0.\label{eq:oldcorolcond2}
\end{align}
Now by multiplying both sides of inequalities (\ref{eq:oldcorolcond1})
and (\ref{eq:oldcorolcond2}) from left and right by $Q^{-1}$, we
obtain (\ref{eq:betacond}) and (\ref{eq:varphicond}) with $P=Q^{-1}$
and $K=MQ^{-1}$. Thus, the result follows from Theorem~\ref{TheoremMain}.
\end{proof}

\begin{figure}
\centering \includegraphics[width=0.6\columnwidth]{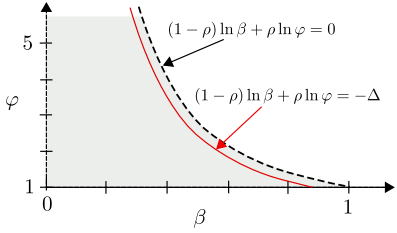}\vskip -7pt\caption{Region for $\beta\in(0,1)$ and $\varphi\in[1,\infty)$ that satisfy
(\ref{eq:betaandvarphicond}) for $\rho=0.4$ }
 \label{FigBetaandVarphi}
\end{figure}

Note that inequalities (\ref{eq:corolcond1}) and (\ref{eq:corolcond2})
are linear in $M\in\mathbb{R}^{m\times n}$ and $Q\in\mathbb{R}^{n\times n}$
for fixed $\beta\in(0,1)$ and $\varphi\in[1,\infty)$. In our method
we seek feasible solutions $M$ and $Q$ for linear matrix inequalities
(\ref{eq:corolcond1}) and (\ref{eq:corolcond2}) by iterating over
a set of values for $\beta\in(0,1)$ and $\varphi\in[1,\infty)$ that
satisfy (\ref{eq:betaandvarphicond}). We do not need to search $\beta$
and $\varphi$ in the entire space characterized by (\ref{eq:betaandvarphicond}).
We restrict the search space and only check feasibility of (\ref{eq:corolcond1})
and (\ref{eq:corolcond2}) for larger values of $\beta$ and $\varphi$
that are close to the boundary of the search space identified by $(1-\rho)\ln\beta+\rho\ln\varphi=0$.
Specifically, we set $\Delta>0$ as a small positive real number,
and then we iterate over a set of values for $\beta$ in the range
$(0,e^{-\frac{\Delta}{1-p}}]$ to look for feasible solutions $M$
and $Q$ for linear matrix inequalities (\ref{eq:corolcond1}) and
(\ref{eq:corolcond2}) with $\varphi=e^{-\frac{(1-\rho)\ln\beta+\Delta}{\rho}}$.
In this approach, feasibility of (\ref{eq:corolcond1}) and (\ref{eq:corolcond2})
is checked only for $\beta\in(0,1)$, $\varphi\in[1,\infty)$ that
are on the curve $(1-\rho)\ln\beta+\rho\ln\varphi=-\Delta$. We illustrate
the curve $(1-\rho)\ln\beta+\rho\ln\varphi=-\Delta$ with solid red
line in Fig.~\ref{FigBetaandVarphi}, where the dark shaded region
corresponds to $\beta\in(0,1)$ and $\varphi\in[1,\infty)$ that satisfy
(\ref{eq:betaandvarphicond}). Note that picking smaller values for
$\Delta>0$ moves the curve towards the boundary identified by $(1-\rho)\ln\beta+\rho\ln\varphi=0$.
Also, there is no conservatism in \emph{not} considering $\beta\in(0,1)$,
$\varphi\in[1,\infty)$ such that $(1-\rho)\ln\beta+\rho\ln\varphi<-\Delta$.
This is due to the fact that if there exist $M$ and $Q$ that satisfy
(\ref{eq:corolcond1}) and (\ref{eq:corolcond2}) for values $\beta=\tilde{\beta}$
and $\varphi=\tilde{\varphi}$, then the same $M$ and $Q$ satisfy
(\ref{eq:corolcond1}) and (\ref{eq:corolcond2}) also for larger
values $\beta>\tilde{\beta}$ and $\varphi>\tilde{\varphi}$; moreover,
for all $\tilde{\beta}\in(0,1)$, $\tilde{\varphi}\in[1,\infty)$
such that $(1-\rho)\ln\tilde{\beta}+\rho\ln\tilde{\varphi}<-\Delta$,
there exist $\beta\geq\tilde{\beta}$ and $\varphi\geq\tilde{\varphi}$
such that $(1-\rho)\ln\beta+\rho\ln\varphi=-\Delta$.

\section{Numerical Example \label{sec:Numerical-Example}}

In this section we present a numerical example to illustrate our results.
Specifically, we consider (\ref{eq:system}) with 
\begin{align*}
A\triangleq\left[\begin{array}{cc}
1 & 0.1\\
-0.5 & 1.1
\end{array}\right], & \quad B\triangleq\left[\begin{array}{c}
0.1\\
1.2
\end{array}\right].
\end{align*}
 We use the event-triggering control law (\ref{eq:attemptedpacketexchangetimes}),
(\ref{eq:controlinputatplantside}) for stabilization of (\ref{eq:system})
over a network. We consider the case where the random packet losses
in the network are characterized by the discrete-time Markov chain
$\{l_{\mathrm{R}}(i)\in\{0,1\}\}_{i\in\mathbb{N}_{0}}$ with initial
distribution $\vartheta_{0}=0$, $\vartheta_{1}=1$, and transition
probabilities $p_{0,1}(i)\triangleq0.2+0.03\sin^{2}(0.1i),$ $p_{1,1}(i)\triangleq0.2+0.03\cos^{2}(0.1i)$,
and $p_{q,0}(i)=1-p_{q,1}(i)$, $q\in\{0,1\}$, $i\in\mathbb{N}_{0}$.
Note that $\{l_{\mathrm{R}}(i)\in\{0,1\}\}_{i\in\mathbb{N}_{0}}$
satisfies (\ref{eq:p1condition}) and (\ref{eq:p0condition}) with
$p_{1}=0.23$ and $p_{0}=0.8$. Furthermore, the network is assumed
to be subject to jamming attacks characterized with $\{l_{\mathrm{J}}(i)\in\{0,1\}\}_{i\in\mathbb{N}_{0}}$
that is independent of $\{l_{\mathrm{R}}(i)\in\{0,1\}\}_{i\in\mathbb{N}_{0}}$
and satisfies (\ref{eq:jammingcondition}) with $\kappa=2$ and $\tau=5$. 

Note that $p_{1}+\frac{p_{0}}{\tau}<0.4$. It follows from Proposition~\ref{PropositionCombinedCase}
that for $\rho=0.4$, there exist $\gamma_{k}\in[0,\infty)$, $k\in\mathbb{N}$,
such that (\ref{eq:lcond1}) and (\ref{eq:lcond2}) of Assumption~\ref{MainAssumption}
hold. Furthermore, matrices 
\begin{align*}
Q & =\left[\begin{array}{cc}
0.618 & -2.119\\
-2.119 & 28.214
\end{array}\right],\,\,M=\left[\begin{array}{cc}
0.202 & -20.405\end{array}\right],
\end{align*}
 and scalars $\beta=0.55$, $\varphi=2.4516$ satisfy (\ref{eq:betaandvarphicond}),
(\ref{eq:corolcond1}), (\ref{eq:corolcond2}). Hence, it follows
from Corollary~\ref{Corollary} that the event-triggered control
law (\ref{eq:attemptedpacketexchangetimes}), (\ref{eq:controlinputatplantside})
with $P=Q^{-1}$ and $K=MQ^{-1}$, guarantees almost sure asymptotic
stabilization.

Fig.~\ref{Flo:statenorm} shows $250$ sample trajectories of the
state norm $\|x(t)\|$ obtained with the same initial condition $x_{0}=\left[1,\,1\right]^{\mathrm{T}}$
and the event-triggering mechanism parameter $\theta=1000$, but with
different sample paths for $\{l_{\mathrm{R}}(i)\in\{0,1\}\}_{i\in\mathbb{N}_{0}}$
and $\{l_{\mathrm{J}}(i)\in\{0,1\}\}_{i\in\mathbb{N}_{0}}$. Furthermore,
in Fig.~\ref{Flo:v1} we show a single sample trajectory of the Lyapunov-like
function $V(x(t))$, and in Fig.~\ref{Flo:l1} we show the corresponding
sample trajectories for $l(\cdot)$, $l_{\mathrm{R}}(\cdot)$, and
$l_{\mathrm{J}}(\cdot)$, indicating packet exchange attempt failures
due to random packet losses and jamming attacks. Note that when packet
exchange attempts fail due to a random loss or a jamming attack, the
control input is set to $0$. As a result, due to unstable dynamics
of the uncontrolled system, the Lyapunov-like function $V(\cdot)$
may grow and take a larger value at the next packet exchange attempt
instant. On the other hand, when a packet exchange attempt between
the plant and the controller is successful, the control input at the
plant side is updated. In this case $V(\cdot)$ is guaranteed to take
a smaller value at the next packet exchange attempt instant, although
it may not be monotonically decreasing. As implied by Corollary~\ref{Corollary},
the Lyapunov-like function $V(\cdot)$ eventually converges to zero
(see Fig.~\ref{Flo:v1}). 

\begin{figure}[t]
\begin{center}\includegraphics[width=0.78\columnwidth]{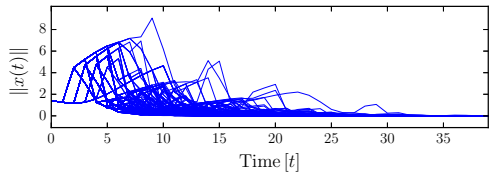}\end{center}\vskip -20pt\caption{Sample paths of the state norm}
\label{Flo:statenorm}
\end{figure}

\begin{figure}[t]
\begin{center}\includegraphics[width=0.78\columnwidth]{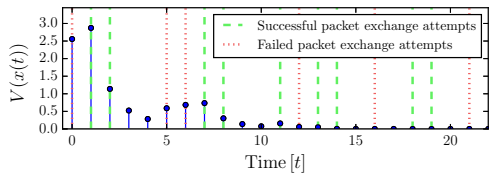}\end{center}\vskip -20pt\caption{A sample path of Lyapunov-like function $V(\cdot)$}
\label{Flo:v1}
\end{figure}

\begin{figure}[t]
\begin{center}\includegraphics[width=0.78\columnwidth]{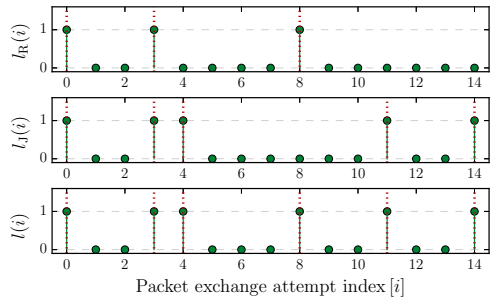}\end{center}\vskip -20pt\caption{Sample paths of $l_{\mathrm{R}}(\cdot)$, $l_{\mathrm{J}}(\cdot)$,
and $l(\cdot)$ }
\label{Flo:l1}
\end{figure}

\section{Conclusion \label{sec:Conclusion}}

In this paper, we explored event-triggered networked control of linear
dynamical systems. We described a probabilistic characterization of
the evolution of the total number of packet exchange failures in a
network that faces random packet losses and jamming attacks. Based
on this characterization, we obtained sufficient conditions for almost
sure asymptotic stabilization of the zero solution and presented a
method for finding a stabilizing feedback gain and parameters for
our proposed event-triggered control framework. In the framework that
we describe in the paper, the controller does not need the information
whether the control input packets are successfully transmitted or
lost. This type of acknowledgment messages would be useful to detect
anomalies such as jamming attacks in the network. Future extensions
include incorporating acknowledgement messages in the framework. 

\bibliographystyle{ieeetr}
\bibliography{references}

\appendix

Lemma~\ref{KeyMarkovLemma} below provides upper bounds on the tail
probabilities of sums involving a binary-valued time-inhomogeneous
Markov chain. 

\begin{aplemma}\label{KeyMarkovLemma} Let $\{\xi(i)\in\{0,1\}\}_{i\in\mathbb{N}_{0}}$
be a time-inhomogeneous Markov chain with transition probabilities
$p_{q,r}\colon\mathbb{N}_{0}\to[0,1]$, $q,r\in\{0,1\}$. Furthermore,
let $\{\chi(i)\in\{0,1\}\}_{i\in\mathbb{N}_{0}}$ be a binary-valued
process that is independent of $\{\xi(i)\in\{0,1\}\}_{i\in\mathbb{N}_{0}}$.
Assume 
\begin{align}
p_{q,1}(i) & \leq\tilde{p},\,\,q\in\{0,1\},\,\,i\in\mathbb{N}_{0},\label{eq:xicond}\\
\mathbb{P}[\sum_{i=0}^{k-1}\chi(i)\leq\tilde{c}+\tilde{w}k] & =1,\,\,k\in\mathbb{N},\label{eq:chicond}
\end{align}
where $\tilde{p}\in(0,1)$, $\tilde{w}\in(0,1]$, and $\tilde{c}\in[0,\infty)$.
It then follows that for all $\rho\in(\tilde{p}\tilde{w},\tilde{w})$,
\begin{align}
\mathbb{P}[\sum_{i=0}^{k-1}\xi(i)\chi(i)>\rho k] & \leq\psi_{k},\quad k\in\mathbb{N},\label{eq:keylemmaresult1}
\end{align}
 where $\psi_{k}\triangleq\phi^{-\rho k+1}\frac{\left((\phi-1)\tilde{p}+1\right)^{\tilde{c}+\tilde{w}k}-1}{(\phi-1)\tilde{p}}$
with $\phi\triangleq\frac{\frac{\rho}{\tilde{w}}(1-\tilde{p})}{\tilde{p}(1-\frac{\rho}{\tilde{w}})}$.
Moreover, $\sum_{k=1}^{\infty}\psi_{k}<\infty.$ 

\end{aplemma}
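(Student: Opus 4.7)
The plan is a Chernoff (exponential Markov) bound tailored to a Markov-chain moment generating function. Fix a real $\phi>1$ to be optimized later, and write $S_k\triangleq\sum_{i=0}^{k-1}\xi(i)\chi(i)$. Markov's inequality applied to $\phi^{S_k}$ gives
\[
\mathbb{P}[S_k>\rho k]=\mathbb{P}[\phi^{S_k}>\phi^{\rho k}]\le\phi^{-\rho k}\mathbb{E}[\phi^{S_k}],
\]
so the task reduces to bounding the MGF $\mathbb{E}[\phi^{S_k}]$.

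For the MGF I would condition on the entire path of $\chi$ and exploit the independence of $\xi$ and $\chi$ together with the Markov property of $\xi$. Since $\xi(i)\chi(i)\in\{0,1\}$ and $p_{q,1}(i)\le\tilde{p}$ by (\ref{eq:xicond}),
\[
\mathbb{E}\bigl[\phi^{\xi(i)\chi(i)}\bigm|\xi(0),\ldots,\xi(i-1),\chi\bigr]=1+(\phi-1)p_{\xi(i-1),1}(i-1)\chi(i)\le 1+(\phi-1)\tilde{p}\,\chi(i)=\alpha^{\chi(i)},
\]
where $\alpha\triangleq 1+(\phi-1)\tilde{p}$, using again that $\chi(i)\in\{0,1\}$. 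Iterating the tower property from $i=k-1$ down to $i=1$, and handling $\xi(0)$ by the crude bound $\mathbb{E}[\phi^{\xi(0)\chi(0)}\mid\chi]\le\phi$ (the initial distribution of $\xi(0)$ carries no uniform bound analogous to $\tilde{p}$), one arrives at
\[
\mathbb{E}[\phi^{S_k}\mid\chi]\le\phi\,\alpha^{\sum_{i=0}^{k-1}\chi(i)}.
\]
The almost-sure envelope (\ref{eq:chicond}) then yields $\sum_{i=0}^{k-1}\chi(i)\le\tilde{c}+\tilde{w}k$, which, combined with $\alpha\ge 1$, produces a deterministic upper bound on $\mathbb{E}[\phi^{S_k}]$. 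A finer-grained accounting, grouping terms of the expanded product by the indices at which $\chi(i)=1$ and telescoping the resulting partial sums via the identity $\alpha^{n}-1=(\alpha-1)\sum_{j=0}^{n-1}\alpha^{j}$, delivers the specific geometric-sum coefficient $\frac{\alpha^{\tilde{c}+\tilde{w}k}-1}{\alpha-1}$ that appears in $\psi_k$.

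To close the argument I would optimize the free parameter $\phi$ by choosing the Chernoff-optimal value $\phi=\frac{(\rho/\tilde{w})(1-\tilde{p})}{\tilde{p}(1-\rho/\tilde{w})}$, which is exactly the exponential tilt used for a $\mathrm{Bin}(\tilde{w}k,\tilde{p})$ upper tail at ratio $\rho/\tilde{w}$; the assumption $\rho\in(\tilde{p}\tilde{w},\tilde{w})$ guarantees $\phi>1$. For summability, a direct computation with this $\phi$ gives $\alpha=\frac{1-\tilde{p}}{1-\rho/\tilde{w}}$ and then $\alpha^{\tilde{w}}/\phi^{\rho}=\exp(-\tilde{w}\,D(\rho/\tilde{w}\|\tilde{p}))<1$, where $D(\cdot\|\cdot)$ is the Bernoulli Kullback--Leibler divergence; hence $\psi_k$ decays geometrically in $k$ and $\sum_{k=1}^{\infty}\psi_k<\infty$ follows immediately.

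The main obstacle I anticipate is the MGF step: threading the time-inhomogeneous Markov structure of $\xi$ through the conditional expectation while carrying the independent $\chi$-factor, and in particular extracting the precise geometric-sum form of $\psi_k$ rather than the cruder single-exponential bound $\phi\,\alpha^{\tilde{c}+\tilde{w}k}$ that the tower-property iteration produces directly. The unconstrained initial $\xi(0)$ is a secondary nuisance that is responsible for the leading factor of $\phi$ visible in the exponent $-\rho k+1$ of $\psi_k$.
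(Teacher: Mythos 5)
Your argument is correct in substance and rests on the same three pillars as the paper's proof: a Chernoff bound with base $\phi$, the uniform transition bound (\ref{eq:xicond}) threaded through the tower property of the Markov chain, and the almost-sure envelope (\ref{eq:chicond}); your choice of $\phi$ and your summability computation ($\phi^{-\rho}\alpha^{\tilde{w}}=e^{-\tilde{w}D(\rho/\tilde{w}\|\tilde{p})}<1$ with $\alpha\triangleq(\phi-1)\tilde{p}+1$) coincide exactly with the paper's (\ref{eq:productofthreeterms}), which establishes positivity of the divergence by hand via $\ln v<v-1$. The organization differs in one respect that changes the constant. You apply Markov's inequality once to the unconditional moment generating function and iterate the conditional expectation along the full product $\prod_{i}\phi^{\xi(i)\chi(i)}$, which cleanly yields the single-exponential bound $\phi^{-\rho k+1}\alpha^{\tilde{c}+\tilde{w}k}$. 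The paper instead partitions the realizations of $(\chi(0),\ldots,\chi(k-1))$ into the sets $F_{s,k}$ according to the number $s$ of ones, applies Markov's inequality to each conditional probability $\mathbb{P}[\sum_{j=1}^{s}\xi(i_{j})>\rho k]$ (bounded by $\phi^{-\rho k}\phi\alpha^{s-1}$ via the induction of Lemma~\ref{ExpectationLemma}, which is the same computation as your tower-property step restricted to the indices where $\chi=1$), and then bounds each $\mathbb{P}[\overline{\chi}(k)\in F_{s,k}]$ by $1$; the factor $\frac{\alpha^{\tilde{c}+\tilde{w}k}-1}{(\phi-1)\tilde{p}}$ in $\psi_{k}$ is the geometric sum $\sum_{s=1}^{\lfloor\tilde{c}+\tilde{w}k\rfloor}\alpha^{s-1}$ over the possible counts $s$, not the result of telescoping inside an expanded product, so the ``finer-grained accounting'' you gesture at is the one vague spot in your writeup and does not actually lead to the stated $\psi_{k}$. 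This is harmless for the lemma's purpose---your bound also decays geometrically and is summable, and neither bound dominates the other in general (compare $\alpha^{n}$ with $(\alpha^{n}-1)/(\alpha-1)$ according to whether $(\phi-1)\tilde{p}$ exceeds $1$)---but to obtain the literal $\psi_{k}$ of the statement you should adopt the decomposition over $F_{s,k}$ rather than try to extract it from the iterated conditional expectation.
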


\vskip 10pt

In the proof of Lemma~\ref{KeyMarkovLemma}, by following the approach
used for obtaining Chernoff-type tail distribution inequalities for
sums of Bernoulli random variables (see \cite{billingsley1986}),
we use Markov's inequality. Some additional key steps (including Lemma~\ref{ExpectationLemma}
below) are also required due to the fact that in Lemma~\ref{KeyMarkovLemma}
we consider sums of (not necessarily independent) random variables
composed of the product of states of a time-inhomogeneous Markov chain
and a binary-valued process that satisfy (\ref{eq:chicond}).

\vskip 10pt

\begin{aplemma} \label{ExpectationLemma} Let $\{\xi(i)\in\{0,1\}\}_{i\in\mathbb{N}_{0}}$
be an $\mathcal{F}_{i}$-adapted binary-valued Markov chain with transition
probability functions $p_{q,r}\colon\mathbb{N}_{0}\to[0,1]$, $q,r\in\{0,1\}$.
Then for all $\phi>1$, $s\in\mathbb{N}$, and $\tilde{p}\in[0,1]$
such that 
\begin{align}
p_{q,1} & (i)\leq\tilde{p},\,\,q\in\{0,1\},\,\,i\in\mathbb{N}_{0},\label{eq:transitioncondition}
\end{align}
 we have 
\begin{align}
\mathbb{E}[\phi^{\sum_{j=1}^{s}\xi(i_{j})}] & \leq\phi\left((\phi-1)\tilde{p}+1\right)^{s-1},\label{eq:lemmaresultfors}
\end{align}
 where $i_{1},i_{2},\ldots,i_{s}\in\mathbb{N}_{0}$ denote indices
such that $0\leq i_{1}<i_{2}<\ldots<i_{s}$. \end{aplemma}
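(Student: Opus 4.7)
The plan is to prove the bound by induction on $s$, exploiting both the Markov property of $\xi$ and the crucial fact that the hypothesis $p_{q,1}(i)\leq \tilde p$ holds uniformly in $q\in\{0,1\}$. This uniformity in the starting state is what will let me control multi-step transitions into state $1$ regardless of the gaps $i_{j+1}-i_j$.

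For the base case $s=1$, since $\xi(i_1)\in\{0,1\}$ and $\phi>1$, we have $\phi^{\xi(i_1)}\leq \phi$ deterministically, so $\mathbb{E}[\phi^{\xi(i_1)}]\leq \phi$, which matches the right-hand side of \eqref{eq:lemmaresultfors} when $s=1$. The factor $\phi$ out front of the product is, in effect, paying for this first (uncontrolled) term.

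For the inductive step, I would first establish an auxiliary fact: for any $n\in\mathbb{N}_0$, $m\in\mathbb{N}$, and $q\in\{0,1\}$,
\begin{align*}
\mathbb{P}[\xi(n+m)=1\mid \xi(n)=q] &\leq \tilde p.
\end{align*}
This follows by a short induction on $m$: the case $m=1$ is the hypothesis \eqref{eq:transitioncondition}, and for $m\geq 2$ one writes
\begin{align*}
\mathbb{P}[\xi(n+m)=1\mid \xi(n)=q] = \sum_{r\in\{0,1\}} p_{q,r}(n)\,\mathbb{P}[\xi(n+m)=1\mid \xi(n+1)=r],
\end{align*}
and bounds each inner probability by $\tilde p$ via the inductive hypothesis, using $p_{q,0}(n)+p_{q,1}(n)=1$. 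The key point is that the bound is uniform over the conditioning state $r$, which is exactly what the hypothesis supplies.

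With this multi-step bound in hand, I would condition on $\mathcal{F}_{i_{s-1}}$ and use $\mathcal{F}_i$-adaptedness to write
\begin{align*}
\mathbb{E}\bigl[\phi^{\sum_{j=1}^{s}\xi(i_j)}\bigr] &= \mathbb{E}\Bigl[\phi^{\sum_{j=1}^{s-1}\xi(i_j)}\,\mathbb{E}[\phi^{\xi(i_s)}\mid \mathcal{F}_{i_{s-1}}]\Bigr].
\end{align*}
The inner conditional expectation equals $1+(\phi-1)\mathbb{P}[\xi(i_s)=1\mid \mathcal{F}_{i_{s-1}}]$, and by the Markov property together with the auxiliary fact above, the conditional probability is at most $\tilde p$. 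Hence the inner expression is bounded almost surely by $(\phi-1)\tilde p+1$, which is a deterministic constant and pulls out of the expectation, leaving $\mathbb{E}[\phi^{\sum_{j=1}^{s-1}\xi(i_j)}]$ multiplied by $(\phi-1)\tilde p+1$. Applying the inductive hypothesis for $s-1$ gives exactly $\phi\bigl((\phi-1)\tilde p+1\bigr)^{s-1}$, closing the induction.

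The only subtle step is the auxiliary multi-step transition bound; everything else is routine conditioning. The reason this step is subtle is that one might be tempted to use only the one-step bound $p_{\xi(i_{s-1}),1}(i_{s-1})\leq \tilde p$, but the indices $i_{s-1}$ and $i_s$ need not be consecutive, so a proper handling of the gap via iterated conditioning (and uniformity in the starting state) is essential.
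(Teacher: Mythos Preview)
Your proof is correct and follows the same inductive skeleton as the paper, but there is one technical difference worth noting. You condition on $\mathcal{F}_{i_{s-1}}$, the $\sigma$-algebra at the \emph{previous selected index}, which forces you to control the $(i_s-i_{s-1})$-step transition probability; hence your auxiliary multi-step bound $\mathbb{P}[\xi(n+m)=1\mid\xi(n)=q]\leq\tilde p$. The paper instead conditions on $\mathcal{F}_{i_s-1}$, the $\sigma$-algebra \emph{one time step before} the last index $i_s$. Since $i_1<\cdots<i_{s-1}\leq i_s-1$, the partial product $\phi^{\sum_{j=1}^{s-1}\xi(i_j)}$ is still measurable there, and by the Markov property $\mathbb{E}[\phi^{\xi(i_s)}\mid\mathcal{F}_{i_s-1}]=\mathbb{E}[\phi^{\xi(i_s)}\mid\xi(i_s-1)]$, so only the one-step bound \eqref{eq:transitioncondition} is needed directly. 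The paper's choice therefore avoids the auxiliary lemma entirely; your route is slightly longer but has the virtue of making explicit the underlying reason the argument works, namely that the uniform-in-$q$ hypothesis propagates to arbitrary horizons.
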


\begin{proof} The proof is based on induction. First, note that for
the case $s=1$, 
\begin{align}
\mathbb{E}[\phi^{\sum_{j=1}^{s}\xi(i_{j})}] & =\mathbb{E}[\phi^{\xi(i_{1})}]\leq\phi.\label{eq:resultfors1}
\end{align}
For the case $s=2$, the random variable $\xi(i_{1})$ is $\mathcal{F}_{i_{2}-1}$-measurable
(since $i_{1}\leq i_{2}-1$), and thus we have 
\begin{align}
\mathbb{E}[\phi^{\sum_{j=1}^{s}\xi(i_{j})}] & =\mathbb{E}[\phi^{\xi(i_{1})}\phi^{\xi(i_{2})}]\nonumber \\
 & =\mathbb{E}[\mathbb{E}[\phi^{\xi(i_{1})}\phi^{\xi(i_{2})}\mid\mathcal{F}_{i_{2}-1}]]\nonumber \\
 & =\mathbb{E}[\phi^{\xi(i_{1})}\mathbb{E}[\phi^{\xi(i_{2})}\mid\mathcal{F}_{i_{2}-1}]].\label{eq:cases2part1}
\end{align}
 Noting that $\{\xi(i)\in\{0,1\}\}_{i\in\mathbb{N}_{0}}$ is a Markov
chain, we obtain $\mathbb{E}[\phi^{\xi(i_{2})}\mid\mathcal{F}_{i_{2}-1}]=\mathbb{E}[\phi^{\xi(i_{2})}\mid\xi(i_{2}-1)]$.
Consequently, 
\begin{align}
 & \mathbb{E}[\phi^{\sum_{j=1}^{s}\xi(i_{j})}]=\mathbb{E}[\phi^{\xi(i_{1})}\mathbb{E}[\phi^{\xi(i_{2})}\mid\xi(i_{2}-1)]]\nonumber \\
 & \,\,=\mathbb{E}\Big[\phi^{\xi(i_{1})}\Big(\phi\mathbb{P}[\xi(i_{2})=1\mid\xi(i_{2}-1)]\nonumber \\
 & \,\,\quad\quad+\mathbb{P}[\xi(i_{2})=0\mid\xi(i_{2}-1)]\Big)\Big]\nonumber \\
 & \,\,=\mathbb{E}\Big[\phi^{\xi(i_{1})}\Big(\phi\mathbb{P}[\xi(i_{2})=1\mid\xi(i_{2}-1)]\nonumber \\
 & \,\,\quad\quad+1-\mathbb{P}[\xi(i_{2})=1\mid\xi(i_{2}-1)]\Big)\Big]\nonumber \\
 & \,\,=\mathbb{E}\Big[\phi^{\xi(i_{1})}\Big((\phi-1)\mathbb{P}[\xi(i_{2})=1\mid\xi(i_{2}-1)]+1\Big)\Big].\label{eq:cases2part2}
\end{align}
Then by using (\ref{eq:transitioncondition}) and (\ref{eq:resultfors1}),
we arrive at 
\begin{align}
\mathbb{E}[\phi^{\sum_{j=1}^{s}\xi(i_{j})}] & \leq\mathbb{E}\Big[\phi^{\xi(i_{1})}\Big((\phi-1)\tilde{p}+1\Big)\Big]\nonumber \\
 & =\mathbb{E}[\phi^{\xi(i_{1})}]((\phi-1)\tilde{p}+1)\nonumber \\
 & \leq\phi((\phi-1)\tilde{p}+1).\label{eq:resultfors2}
\end{align}
Hence, by (\ref{eq:resultfors1}) and (\ref{eq:resultfors2}), (\ref{eq:lemmaresultfors})
is satisfied for $s\in\{1,2\}$. 

Now, suppose that (\ref{eq:lemmaresultfors}) holds for $s=\tilde{s}>2$,
that is, 
\begin{align}
\mathbb{E}[\phi^{\sum_{j=1}^{\tilde{s}}\xi(i_{j})}] & \leq\phi\left((\phi-1)\tilde{p}+1\right)^{\tilde{s}-1}.\label{eq:resulttildes}
\end{align}
We now show that (\ref{eq:lemmaresultfors}) holds for $s=\tilde{s}+1$.
Using similar arguments that we used for obtaining (\ref{eq:cases2part1})--(\ref{eq:resultfors2}),
we obtain 
\begin{align}
\mathbb{E}[\phi^{\sum_{j=1}^{\tilde{s}+1}\xi(i_{j})}] & =\mathbb{E}[\phi^{\sum_{j=1}^{\tilde{s}}\xi(i_{j})}\phi^{\xi(i_{\tilde{s}+1})}]\nonumber \\
 & =\mathbb{E}[\mathbb{E}[\phi^{\sum_{j=1}^{\tilde{s}}\xi(i_{j})}\phi^{\xi(i_{\tilde{s}+1})}\mid\mathcal{F}_{i_{\tilde{s}+1}-1}]]\nonumber \\
 & =\mathbb{E}[\phi^{\sum_{j=1}^{\tilde{s}}\xi(i_{j})}\mathbb{E}[\phi^{\xi(i_{\tilde{s}+1})}\mid\mathcal{F}_{i_{\tilde{s}+1}-1}]]\nonumber \\
 & =\mathbb{E}[\phi^{\sum_{j=1}^{\tilde{s}}\xi(i_{j})}\mathbb{E}[\phi^{\xi(i_{\tilde{s}+1})}\mid\xi(i_{\tilde{s}+1}-1)]]\nonumber \\
 & \leq\mathbb{E}[\phi^{\sum_{j=1}^{\tilde{s}}\xi(i_{j})}]((\phi-1)\tilde{p}+1).\label{eq:casesfortildesplus1part1}
\end{align}
Using (\ref{eq:resulttildes}) and (\ref{eq:casesfortildesplus1part1}),
we arrive at 
\begin{align*}
\mathbb{E}[\phi^{\sum_{j=1}^{\tilde{s}+1}\xi(i_{j})}] & \leq\phi\left((\phi-1)\tilde{p}+1\right)^{\tilde{s}},
\end{align*}
 which completes the proof. \end{proof}

\vskip 10pt

\emph{Proof of Lemma~\ref{KeyMarkovLemma}:} First, let 
\begin{align*}
\overline{\xi}(k) & \triangleq[\xi(0),\xi(1),\ldots,\xi(k-1)]^{\mathrm{T}},\\
\overline{\chi}(k) & \triangleq[\chi(0),\chi(1),\ldots,\chi(k-1)]^{\mathrm{T}},\quad k\in\mathbb{N}.
\end{align*}
 Now let 
\begin{align*}
F_{s,k} & \triangleq\{\overline{\chi}\in\{0,1\}^{k}\colon\overline{\chi}^{\mathrm{T}}\overline{\chi}=s\},\,\,s\in\{0,1,\ldots,k\},\,k\in\mathbb{N}.
\end{align*}
 It is important to note that $F_{s_{1},k}\cap F_{s_{2},k}=\emptyset$,
$s_{1}\neq s_{2}$; moreover, due to (\ref{eq:chicond}) 
\begin{align*}
\mathbb{P}[\overline{\chi}(k)\in\cup_{s=0}^{\lfloor\tilde{c}+\tilde{w}k\rfloor}F_{s,k}] & =1,\quad k\in\mathbb{N}.
\end{align*}
It then follows that for all $\rho\in(\tilde{p}\tilde{w},1)$ and
$k\in\mathbb{N}$, 
\begin{align}
 & \mathbb{P}[\sum_{i=0}^{k-1}\xi(i)\chi(i)>\rho k]=\mathbb{P}[\overline{\xi}^{\mathrm{T}}(k)\overline{\chi}(k)>\rho k]\nonumber \\
 & \quad=\sum_{s=0}^{\lfloor\tilde{c}+\tilde{w}k\rfloor}\sum_{\overline{\chi}\in F_{s,k}}\mathbb{P}[\overline{\xi}^{\mathrm{T}}(k)\overline{\chi}(k)>\rho k\mid\overline{\chi}(k)=\overline{\chi}]\nonumber \\
 & \quad\quad\quad\cdot\mathbb{P}[\overline{\chi}(k)=\overline{\chi}].\label{eq:xibarchibarequation}
\end{align}
Due to the mutual independence of $\xi(\cdot)$ and $\chi(\cdot)$,
\begin{align}
\mathbb{P}[\overline{\xi}^{\mathrm{T}}(k)\overline{\chi}(k)>\rho k\mid\overline{\chi}(k)=\overline{\chi}] & =\mathbb{P}[\overline{\xi}^{\mathrm{T}}(k)\overline{\chi}>\rho k].\label{eq:conditionalprobabilityresolution}
\end{align}
As a result, it follows from (\ref{eq:xibarchibarequation}) and (\ref{eq:conditionalprobabilityresolution})
that for $k\in\mathbb{N}$, 
\begin{align*}
 & \mathbb{P}[\sum_{i=0}^{k-1}\xi(i)\chi(i)>\rho k]\\
 & \,\,=\sum_{s=0}^{\lfloor\tilde{c}+\tilde{w}k\rfloor}\sum_{\overline{\chi}\in F_{s,k}}\mathbb{P}[\overline{\xi}^{\mathrm{T}}(k)\overline{\chi}>\rho k]\mathbb{P}[\overline{\chi}(k)=\overline{\chi}].
\end{align*}
Next, note that $\mathbb{P}[\overline{\xi}^{\mathrm{T}}(k)\overline{\chi}>\rho k]=0$
for $\overline{\chi}\in F_{0,k}$. Hence, for all $k\in\mathbb{N}$
such that $\lfloor\tilde{c}+\tilde{w}k\rfloor=0$, we have 
\begin{align}
 & \sum_{s=0}^{\lfloor\tilde{w}k\rfloor}\sum_{\overline{\chi}\in F_{s,k}}\mathbb{P}[\overline{\xi}^{\mathrm{T}}(k)\overline{\chi}>\rho k]\mathbb{P}[\overline{\chi}(k)=\overline{\chi}]\nonumber \\
 & \quad=0.
\end{align}
 Furthermore, for all $k\in\mathbb{N}$ such that $\lfloor\tilde{c}+\tilde{w}k\rfloor\geq1$,
we have 
\begin{align}
 & \sum_{s=0}^{\lfloor\tilde{c}+\tilde{w}k\rfloor}\sum_{\overline{\chi}\in F_{s,k}}\mathbb{P}[\overline{\xi}^{\mathrm{T}}(k)\overline{\chi}>\rho k]\mathbb{P}[\overline{\chi}(k)=\overline{\chi}]\nonumber \\
 & \,=\sum_{s=1}^{\lfloor\tilde{c}+\tilde{w}k\rfloor}\sum_{\overline{\chi}\in F_{s,k}}\mathbb{P}[\overline{\xi}^{\mathrm{T}}(k)\overline{\chi}>\rho k]\mathbb{P}[\overline{\chi}(k)=\overline{\chi}].\label{eq:sumoverchibar}
\end{align}
Now, for $s\in\{1,2,\ldots,\lfloor\tilde{c}+\tilde{w}k\rfloor\}$,
let $i_{1}(\overline{\chi}),i_{2}(\overline{\chi}),\ldots,i_{s}(\overline{\chi})$
denote the indices of the nonzero entries of $\overline{\chi}\in F_{s,k}$
such that $i_{1}(\overline{\chi})<i_{2}(\overline{\chi})<\cdots<i_{s}(\overline{\chi})$.
Consequently, 
\begin{align}
\mathbb{P}[\overline{\xi}^{\mathrm{T}}(k)\overline{\chi}>\rho k] & =\mathbb{P}[\sum_{j=1}^{s}\overline{\xi}_{i_{j}(\bar{\chi})}(k)>\rho k]\nonumber \\
 & =\mathbb{P}[\sum_{j=1}^{s}\xi(i_{j}(\overline{\chi})-1)>\rho k],\label{eq:xibarequation}
\end{align}
for $\overline{\chi}\in F_{s,k}$, $s\in\{1,2,\ldots,\lfloor\tilde{c}+\tilde{w}k\rfloor\}$,
and $k\in\mathbb{N}$ such that $\lfloor\tilde{w}k\rfloor\geq1$. 

Now note that $\phi>1$, since $\rho\in(\tilde{p}\tilde{w},\tilde{w})$.
We use Markov's inequality to obtain 
\begin{align}
\mathbb{P}[\overline{\xi}^{\mathrm{T}}(k)\overline{\chi}>\rho k] & \leq\mathbb{P}[\sum_{j=1}^{s}\xi(i_{j}(\overline{\chi})-1)\geq\rho k]\nonumber \\
 & =\mathbb{P}[\phi^{\sum_{j=1}^{s}\xi(i_{j}(\overline{\chi})-1)}\geq\phi^{\rho k}]\nonumber \\
 & \leq\phi^{-\rho k}\mathbb{E}[\phi^{\sum_{j=1}^{s}\xi(i_{j}(\overline{\chi})-1)}].\label{eq:xiphiineq}
\end{align}
 Now it follows from Lemma~\ref{ExpectationLemma} that $\mathbb{E}[\phi^{\sum_{j=1}^{s}\xi(i_{j}(\overline{\chi})-1)}]\leq\phi\left((\phi-1)\tilde{p}+1\right)^{s-1}$.
Using this inequality together with (\ref{eq:sumoverchibar}) and
(\ref{eq:xiphiineq}), for all $k\in\mathbb{N}$ such that $\lfloor\tilde{c}+\tilde{w}k\rfloor\geq1$,
we obtain 
\begin{align}
 & \sum_{s=0}^{\lfloor\tilde{c}+\tilde{w}k\rfloor}\sum_{\overline{\chi}\in F_{s,k}}\mathbb{P}[\overline{\xi}^{\mathrm{T}}(k)\overline{\chi}>\rho k]\mathbb{P}[\overline{\chi}(k)=\overline{\chi}]\nonumber \\
 & \,\leq\sum_{s=1}^{\lfloor\tilde{c}+\tilde{w}k\rfloor}\sum_{\overline{\chi}\in F_{s,k}}\phi^{-\rho k}\phi\left((\phi-1)\tilde{p}+1\right)^{s-1}\mathbb{P}[\overline{\chi}_{k}=\overline{\chi}]\nonumber \\
 & \,=\phi^{-\rho k+1}\sum_{s=1}^{\lfloor\tilde{c}+\tilde{w}k\rfloor}\left((\phi-1)\tilde{p}+1\right)^{s-1}\sum_{\overline{\chi}\in F_{s,k}}\mathbb{P}[\overline{\chi}_{k}=\overline{\chi}]\nonumber \\
 & \,=\phi^{-\rho k+1}\sum_{s=1}^{\lfloor\tilde{c}+\tilde{w}k\rfloor}\left((\phi-1)\tilde{p}+1\right)^{s-1}\mathbb{P}[\overline{\chi}_{k}\in F_{s,k}]\nonumber \\
 & \,\leq\phi^{-\rho k+1}\sum_{s=1}^{\lfloor\tilde{c}+\tilde{w}k\rfloor}\left((\phi-1)\tilde{p}+1\right)^{s-1},\label{eq:finaxichiinequality}
\end{align}
where we also used the fact that $\mathbb{P}[\overline{\chi}_{k}\in F_{s,k}]\leq1$
to obtain the last inequality. Here, we have 
\begin{align}
 & \sum_{s=1}^{\lfloor\tilde{c}+\tilde{w}k\rfloor}\left((\phi-1)\tilde{p}+1\right)^{s-1}\nonumber \\
 & \quad\quad=\frac{\left((\phi-1)\tilde{p}+1\right)^{\lfloor\tilde{c}+\tilde{w}k\rfloor}-1}{\left((\phi-1)\tilde{p}+1\right)-1}\nonumber \\
 & \quad\quad\leq\frac{\left((\phi-1)\tilde{p}+1\right)^{\tilde{c}+\tilde{w}k}-1}{(\phi-1)\tilde{p}}.\label{eq:geometricseriesfinitesum}
\end{align}
Hence, (\ref{eq:finaxichiinequality}) and (\ref{eq:geometricseriesfinitesum})
imply 
\begin{align}
 & \sum_{s=0}^{\lfloor\tilde{c}+\tilde{w}k\rfloor}\sum_{\overline{\chi}\in F_{s,k}}\mathbb{P}[\overline{\xi}^{\mathrm{T}}(k)\overline{\chi}>\rho k]\mathbb{P}[\overline{\chi}(k)=\overline{\chi}]\nonumber \\
 & \quad\leq\phi^{-\rho k+1}\frac{\left((\phi-1)\tilde{p}+1\right)^{\tilde{c}+\tilde{w}k}-1}{(\phi-1)\tilde{p}},\label{eq:finalphiwinequality}
\end{align}
for all $k\in\mathbb{N}$ such that $\lfloor\tilde{c}+\tilde{w}k\rfloor\geq1$.
Note that since $\sum_{s=0}^{\lfloor\tilde{c}+\tilde{w}k\rfloor}\sum_{\overline{\chi}\in F_{s,k}}\mathbb{P}[\overline{\xi}^{\mathrm{T}}(k)\overline{\chi}>\rho k]\mathbb{P}[\overline{\chi}(k)=\overline{\chi}]=0$
for all $k\in\mathbb{N}$ such that $\lfloor\tilde{c}+\tilde{w}k\rfloor=0$,
(\ref{eq:finalphiwinequality}) holds for all $k\in\mathbb{N}$, that
is, 
\begin{align}
 & \sum_{s=0}^{\lfloor\tilde{c}+\tilde{w}k\rfloor}\sum_{\overline{\chi}\in F_{s,k}}\mathbb{P}[\overline{\xi}^{\mathrm{T}}(k)\overline{\chi}>\rho k]\mathbb{P}[\overline{\chi}(k)=\overline{\chi}]\nonumber \\
 & \quad\leq\phi^{-\rho k+1}\frac{\left((\phi-1)\tilde{p}+1\right)^{\tilde{c}+\tilde{w}k}-1}{(\phi-1)\tilde{p}},\quad k\in\mathbb{N}.\label{eq:finalphiwineequalityforallk}
\end{align}
Hence (\ref{eq:finaxichiinequality}) and (\ref{eq:geometricseriesfinitesum})
imply (\ref{eq:keylemmaresult1}). 

Our next goal is to show that $\sum_{k=1}^{\infty}\psi_{k}<\infty$.
To this end, first note that 
\begin{align}
\sum_{k=1}^{\infty}\psi_{k} & =\sum_{k=1}^{\infty}\phi^{-\rho k+1}\frac{\left((\phi-1)\tilde{p}+1\right)^{\tilde{c}+\tilde{w}k}-1}{(\phi-1)\tilde{p}}\nonumber \\
 & =\frac{\phi\left((\phi-1)\tilde{p}+1\right)^{\tilde{c}}}{(\phi-1)\tilde{p}}\sum_{k=1}^{\infty}\phi^{-\rho k}\left((\phi-1)\tilde{p}+1\right)^{\tilde{w}k}\nonumber \\
 & \quad-\frac{\phi}{(\phi-1)\tilde{p}}\sum_{k=1}^{\infty}\phi^{-\rho k}.\label{eq:psisum}
\end{align}
We will show that the series on the far right hand side of (\ref{eq:psisum})
are both convergent. First of all, since $\phi>1$, we have $\phi^{-\rho}<1$,
and thus, the geometric series $\sum_{k=1}^{\infty}\phi^{-\rho k}$
converges, that is, 
\begin{align}
\sum_{k=1}^{\infty}\phi^{-\rho k} & <\infty.\label{eq:easyconvergentsum}
\end{align}
Next, we show $\phi^{-\rho}\left((\phi-1)\tilde{p}+1\right)^{\tilde{w}}<1$.
We obtain 
\begin{align}
\phi^{-\rho}\left((\phi-1)\tilde{p}+1\right)^{\tilde{w}} & =\left(\phi^{-\frac{\rho}{\tilde{w}}}\left((\phi-1)\tilde{p}+1\right)\right)^{\tilde{w}}.\label{eq:wpower}
\end{align}
 Furthermore, 
\begin{align}
 & \phi^{-\frac{\rho}{\tilde{w}}}\left((\phi-1)\tilde{p}+1\right)\nonumber \\
 & \quad=\left(\frac{\frac{\rho}{\tilde{w}}(1-\tilde{p})}{\tilde{p}(1-\frac{\rho}{\tilde{w}})}\right)^{-\frac{\rho}{\tilde{w}}}\left(\left(\frac{\frac{\rho}{\tilde{w}}(1-\tilde{p})}{\tilde{p}(1-\frac{\rho}{\tilde{w}})}-1\right)\tilde{p}+1\right)\nonumber \\
 & \quad=\left(\frac{\tilde{p}\tilde{w}}{\rho}\right)^{\frac{\rho}{\tilde{w}}}\left(\frac{1-\tilde{p}}{1-\frac{\rho}{\tilde{w}}}\right)^{-\frac{\rho}{\tilde{w}}}\left(\frac{1-\tilde{p}}{1-\frac{\rho}{\tilde{w}}}\right)\nonumber \\
 & \quad=\left(\frac{\tilde{p}\tilde{w}}{\rho}\right)^{\frac{\rho}{\tilde{w}}}\left(\frac{1-\tilde{p}}{1-\frac{\rho}{\tilde{w}}}\right)^{1-\frac{\rho}{\tilde{w}}}.\label{eq:productofthreeterms}
\end{align}
 Note that $\frac{\tilde{p}\tilde{w}}{\rho},\frac{1-\tilde{p}}{1-\frac{\rho}{\tilde{w}}}\in(0,1)\cup(1,\infty)$.
Since $\ln v<v-1$ for any $v\in(0,1)\cup(1,\infty)$, we have 
\begin{align*}
 & \ln\left(\phi^{-\frac{\rho}{\tilde{w}}}\left((\phi-1)\tilde{p}+1\right)\right)\\
 & \quad=\frac{\rho}{\tilde{w}}\ln\left(\frac{\tilde{p}\tilde{w}}{\rho}\right)+(1-\frac{\rho}{\tilde{w}})\ln\left(\frac{1-\tilde{p}}{1-\frac{\rho}{\tilde{w}}}\right)\\
 & \quad<\frac{\rho}{\tilde{w}}\left(\frac{\tilde{p}\tilde{w}}{\rho}-1\right)+(1-\frac{\rho}{\tilde{w}})\left(\frac{1-\tilde{p}}{1-\frac{\rho}{\tilde{w}}}-1\right)\\
 & \quad=\tilde{p}-\frac{\rho}{\tilde{w}}+\frac{p}{\tilde{w}}-\tilde{p}\\
 & \quad=0,
\end{align*}
 which implies that $\phi^{-\frac{\rho}{\tilde{w}}}\left((\phi-1)\tilde{p}+1\right)<1$,
and hence by (\ref{eq:wpower}), $\phi^{-\rho}\left((\phi-1)\tilde{p}+1\right)^{\tilde{w}}<1$.
Therefore, 
\begin{align}
\sum_{k=1}^{\infty}\phi^{-\rho k}\left((\phi-1)\tilde{p}+1\right)^{\tilde{w}k}<\infty.\label{eq:difficultcasesum}
\end{align}
Finally, (\ref{eq:psisum}), (\ref{eq:easyconvergentsum}), and (\ref{eq:difficultcasesum})
imply $\sum_{k=1}^{\infty}\psi_{k}<\infty$. $\hfill \square$
\end{document}